%File: anonymous-submission-latex-2023.tex
\documentclass[letterpaper]{article} % DO NOT CHANGE THIS
\usepackage[preprint]{aaai23}  % DO NOT CHANGE THIS
\usepackage{times}  % DO NOT CHANGE THIS
\usepackage{helvet}  % DO NOT CHANGE THIS
\usepackage{courier}  % DO NOT CHANGE THIS
\usepackage[hyphens]{url}  % DO NOT CHANGE THIS
\usepackage{graphicx} % DO NOT CHANGE THIS
\urlstyle{rm} % DO NOT CHANGE THIS
  % DO NOT CHANGE THIS
\usepackage{natbib}  % DO NOT CHANGE THIS AND DO NOT ADD ANY OPTIONS TO IT
\usepackage{caption} % DO NOT CHANGE THIS AND DO NOT ADD ANY OPTIONS TO IT

\usepackage{booktabs}
\usepackage{tabularx}
\usepackage[dvipsnames]{xcolor}
\usepackage{tcolorbox}
\usepackage{multirow}

\usepackage{centernot}
\usepackage{algorithm2e}
\usepackage{comment}
\usepackage{enumerate}
\usepackage{amssymb}
\usepackage{amsmath}
\usepackage{amsthm}
\usepackage{subcaption}
\usepackage{multicol}
\usepackage{makecell}

\frenchspacing  % DO NOT CHANGE THIS
\setlength{\pdfpagewidth}{8.5in} % DO NOT CHANGE THIS
\setlength{\pdfpageheight}{11in} % DO NOT CHANGE THIS
%

%
% These are are recommended to typeset listings but not required. See the subsubsection on listing. Remove this block if you don't have listings in your paper.
%\usepackage{newfloat}
%\usepackage{listings}
%\DeclareCaptionStyle{ruled}{labelfont=normalfont,labelsep=colon,strut=off} % DO NOT CHANGE THIS
%\lstset{%
%	basicstyle={\footnotesize\ttfamily},% footnotesize acceptable for monospace
%	numbers=left,numberstyle=\footnotesize,xleftmargin=2em,% show line numbers, remove this entire line if you don't want the numbers.
%	aboveskip=0pt,belowskip=0pt,%
%	showstringspaces=false,tabsize=2,breaklines=true}
%\floatstyle{ruled}
%\newfloat{listing}{tb}{lst}{}
%\floatname{listing}{Listing}
%
% Keep the \pdfinfo as shown here. There's no need
% for you to add the /Title and /Author tags.

\pdfinfo{
/TemplateVersion (2023.1)
}

\setcounter{secnumdepth}{0} %May be changed to 1 or 2 if section numbers are desired.

% The file aaai23.sty is the style file for AAAI Press
% proceedings, working notes, and technical reports.
%

\newtheorem{lemma}{Lemma}
\newtheorem{theorem}{Theorem}
\newtheorem{corollary}{Corollary}

\newtheorem{exmp}{Example}

\newcommand{\ind}{\rotatebox[origin=c]{90}{$\models$}}

\newcommand{\X}{{\mathbf{X}}}
\newcommand{\Y}{{\mathbf{Y}}}
\newcommand{\Z}{{\mathbf{Z}}}
\newcommand{\G}{{\mathcal{G}}}

\newcommand{\cS}{{\mathcal{S}}}

% Title

% Your title must be in mixed case, not sentence case.
% That means all verbs (including short verbs like be, is, using,and go),
% nouns, adverbs, adjectives should be capitalized, including both words in hyphenated terms, while
% articles, conjunctions, and prepositions are lower case unless they
% directly follow a colon or long dash
\title{Vector Causal Inference between Two Groups of Variables}
\author{
    %Authors
    % All authors must be in the same font size and format.
    Jonas Wahl,\equalcontrib\textsuperscript{,\rm 1,\rm 2}
    Urmi Ninad,\equalcontrib\textsuperscript{,\rm 1,\rm 2}
    Jakob Runge\textsuperscript{\rm 1,\rm 2}
}
\affiliations{
    %Afiliations
    \textsuperscript{\rm 1}Technische Universit\"{a}t Berlin\\
    \textsuperscript{\rm 2} DLR Institut für Datenwissenschaften Jena \\
    % If you have multiple authors and multiple affiliations
    % use superscripts in text and roman font to identify them.
    % For example,
    % email address must be in roman text type, not monospace or sans serif
    wahl@tu-berlin.de, urmi.ninad@tu-berlin.de, runge@tu-berlin.de
%
% See more examples next
}

%Example, Single Author, ->> remove \iffalse,\fi and place them surrounding AAAI title to use it
\iffalse
\title{My Publication Title --- Single Author}
\author {
    Author Name
}
\affiliations{
    Affiliation\\
    Affiliation Line 2\\
    name@example.com
}
\fi

\iffalse
%Example, Multiple Authors, ->> remove \iffalse,\fi and place them surrounding AAAI title to use it
\title{My Publication Title --- Multiple Authors}
\author {
    % Authors
    First Author Name,\textsuperscript{\rm 1}
    Second Author Name, \textsuperscript{\rm 2}
    Third Author Name \textsuperscript{\rm 1}
}
\affiliations {
    % Affiliations
    \textsuperscript{\rm 1} Affiliation 1\\
    \textsuperscript{\rm 2} Affiliation 2\\
    firstAuthor@affiliation1.com, secondAuthor@affilation2.com, thirdAuthor@affiliation1.com
}
\fi

% REMOVE THIS: bibentry
% This is only needed to show inline citations in the guidelines document. You should not need it and can safely delete it.
%\usepackage{bibentry}
% END REMOVE bibentry

 \begin{document}

\maketitle
\begin{abstract}
\begin{quote}
Methods to identify cause-effect relationships currently mostly assume the variables to be scalar random variables. However, in many fields the objects of interest are vectors or groups of scalar variables.
We present a new constraint-based non-parametric approach for inferring the causal relationship between two vector-valued random variables from observational data. Our method employs sparsity estimates of directed and undirected graphs and is based on two new principles for groupwise causal reasoning that we justify theoretically in Pearl's graphical model-based causality framework. Our theoretical considerations are complemented by two new causal discovery algorithms for causal interactions between two random vectors which find the correct causal direction reliably in simulations even if interactions are nonlinear. We evaluate our methods empirically and compare them to other state-of-the-art techniques. 
\end{quote}
\end{abstract}
\let\thefootnote\relax\footnotetext{Accepted at AAAI 2023.}
\section{Introduction}
In recent years, many methods have been developed in order to infer cause-effect relationships between random variables from observational data, see e.g. \citet{PearlCausality,Spirtes2000,ShimizuLinGaM,PetJanSch17,runge2015identifying}. Most often these random variables are assumed to be scalar; an assumption which covers many but by no means all questions of interest in science. For instance, neuroscience researchers are often interested in the causal interactions between different brain \emph{regions} each of which are represented by a multitude of measurement locations in fMRI data. Similarly, in the Earth sciences, researchers would like to understand causal relationship between variables (such as sea surface temperature, air pressure or wind speed) that have each been measured at a number of grid locations in predefined areas on the planet. To make inferences in such a setup, standard approaches often proceed to drastically reduce the number of measurement variables, for instance by computing average values across each region of interest or by applying statistical dimension reduction techniques such as principal component analysis. However, aggregating data in such a way might lead to faulty causal conclusions: For instance, conditional independencies $\X \ind \Y | \Z$ between vectors might not be visible in their mean values \citep{Spirtes2000}. Opposing causal effects from different parts of a region might average to zero, and causally relevant information might be diluted or lost when considering aggregate variables, see Figure \ref{fig:dimred_pitfalls}. Furthermore, methods that rely on the non-Gaussianity of noise to make inferences about the causal direction, such as LinGaM \citep{ShimizuLinGaM}, are rendered weak by the averaging due to the central limit theorem driving the average noise closer to a Gaussian.

\begin{figure}
    \centering
    \includegraphics[scale=1.7]{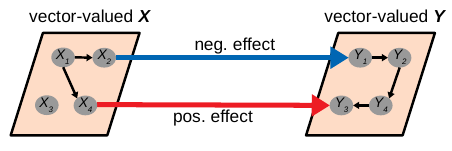}
    \caption{Two components in the vector-valued variable $\X$  have effects on $\Y$ that are of opposing signs, such that aggregation of $\X$ and $\Y$ leads to a dilution or cancellation of  dependence.}
    \label{fig:dimred_pitfalls}
\end{figure}

In this work, we aim to develop new techniques to infer the causal relationship between two groups of variables, represented as random \emph{vectors}, without a dimension reduction step. Assuming that the causal arrows go from one group to the other only, the most straightforward way to do so is to run a standard causal discovery algorithm, e.g. the PC algorithm, on all microvariables (i.e. all entries of both vectors), and then choose the cause group as the one that has the most edges pointing to the other group. When groups become large, this approach, henceforth called \emph{Vanilla-PC}, has disadvantages: since it needs to determine the full causal `microstructure', it has to run many conditional independence tests and due to the sequential error propagation of the PC algorithm becomes unreliable quickly at small sample size (see Subsection \ref{subsec.compPC} in the Supplement for an empirical illustration of this). In essence, Vanilla-PC computes more structure than is needed to answer the causal query at hand and therefore uses the data inefficiently. We therefore take a different road and combine the constraint-based approach for causal discovery with sparsity measures of the \emph{internal (causal) structure} of the groups.
Our methods are based on the following two principles for causal interaction between groups of random variables:
\begin{itemize}
\item[\textbf{(P1)}] generically, conditioning on the cause group does not \emph{create} new conditional dependencies \emph{within} the effect group;
\item[\textbf{(P2)}] generically, conditioning on the effect group does not \emph{delete} conditional dependencies \emph{within} the cause group.
\end{itemize}

\noindent Here, the term \emph{generically} is to be understood as in other assumptions for causal inference such as the causal Markov property, Faithfulness and the principle of independent cause and mechanism (ICM), see e.g. \citet{PetJanSch17}. That is to say, (P1) and (P2) can only be violated if the causal mechanism is in some way fine-tuned to the exogenous noise variables of the model.

\noindent We justify these principles more thoroughly by considering the setting where the scalar microvariables are modelled by a causal graphical model over a directed acyclic graph (DAG). In this setup, (P1) and (P2) turn out to be \emph{implied} by the causal Markov property and Faithfulness. Based on these principles, we prove that, in this purely causal setup, the correct causal direction is identifiable under weak assumptions (Theorem \ref{thm.main}). Moreover, we provide an algorithm for distinguishing cause from effect called \textbf{2G-VecCI.PC} (two  group vector causal inference, PC method) which is based on density estimation of each group through  the PC algorithm and which is sound and complete under said assumptions. To our knowledge, our method is the first non-parametric method to infer causal directionality between two groups of variables (other than Vanilla-PC). \\
%Moreover, we provide a consistent algorithm under the assumption that the conditional independence tests it employs estimate conditional independences correctly in the infinite sample limit. 
\noindent In addition to the purely causal setting, we consider the setup where cause and effect group are related through a structural causal model of the form
\begin{align*}
\X \ &:= \ \eta_{\X}, \\
\Y \ &:= \ \mathbf{f}(\X, \eta_{\Y}), \qquad \eta_{\X} \ind \eta_{\Y}.
\end{align*}
Importantly, we do not assume that the individual components of the noise vector $\eta_{\X}$ (respectively $\eta_{\Y}$) are pairwise independent. Such a model can be reasonable when the internal interactions \emph{within} a variable group do not admit a straightforward causal interpretation while the interactions \emph{between} groups do. One might therefore consider such a model \emph{semi-causal}. For instance, if $\X$ describes a field of surface temperature measurements on different grid locations, stating that the measurement at location $i$ causes the measurement at location $j$ might be inappropriate. For such a semi-causal SCM, it is much harder to prove theoretical guarantees for the validity of (P1) and (P2), and we only demonstrate what violations of these principles would entail in a toy example. Nevertheless a second version of our inference algorithm dubbed \textbf{2G-VecCI.Full} (two group vector causal inference, full conditioning method) is able to find the correct causal direction in many cases in simulated data. In both the causal and the semi-causal setting, our algorithms make inferences by estimating the sparsity of graphs that encode conditional dependence or causal relationships within a variable group before and after conditioning on the other variable group.
At present our methods assume that samples are i.i.d., and that the sample size is larger than the total size of the vector-valued variables. Both methods are based purely on conditional (in)dependence relationships and, with appropriately chosen tests, work even if causal interactions are nonlinear.
%(P1) and (P2) are related to the \textbf{principle of independence of cause and mechanism} which states that in most situations, a cause and the mechanism by which it brings about the effect are not fine-tuned to each other. In a similar fashion, (P1) and (P2) can be understood as assumptions on the absence of fine-tuning 

We will present our theoretical identifiability results and the necessary assumptions in Section 3\ref{sec.theory}. After that, we describe two different versions of our algorithm for causal discovery between variable groups in Section 4\ref{sec.algorithms}. In Section 5, %\ref{sec.experiments}
we analyse the empirical performance of these algorithms in experiments with synthetic data and compare it to that of other approaches (Vanilla-PC and the Trace Method \citep{Janzing09, ZscheiJanZhang12}). We also consider a real world climate science example of surface temperatures in the El Ni\~{n}o Southern Oscillation (ENSO 3.4) region in the pacific and in British Columbia to test our algorithms. We conclude with a discussion and outlook in Section 6. 

\section{Related Work}
Although the majority of causal discovery results focus on scalar variables, the idea to study causal interactions between groups of random variables is not new. On the theoretical side,  \citet{Rubensteinetal17}, \citet{ChaEbPer16}, \citet{ChalupkaEtAl16} and  \citet{ChaEbPer17} discuss to which extent micro variables can be aggregated to macro variables without losing causal information. \citet{ParKas17} discuss theoretical assumptions of multi-group causal discovery in connection to the PC-algorithm (as opposed to the two group identifiability problem discussed here). For two linearly interacting groups, \citet{Janzing09} introduce a causal discovery algorithm called the Trace method, see also \citet{ZscheiJanZhang12}. In \citet{EntHoy12}, scalar causal discovery techniques based on non-Gaussianity assumptions such as LiNGaM \citep{ShimizuLinGaM} are generalized to the vector-valued setting. We summarize the existing approaches, as well as their assumptions, strengths and weaknesses in Table \ref{table.comparison} in the supplement.

\section{Identifiability Results} \label{sec.theory}

\subsubsection{Theoretical Setup}

We will consider scalar random variables $X_1,\dots,X_n,Y_1,\dots,Y_m$ that are grouped into two vectors $\X = \{ X_1,\dots,X_n \}, \Y = \{ Y_1,\dots,Y_m \}$. We assume that the data is generated by a causal process $\X \rightarrow \Y$ as outlined below, and our goal is to infer the correct causal direction from the observational distribution $P_{\X,\Y}$. We will operate under different sets of assumptions that relate $P_{\X,\Y}$ to causal representations (see Model 1 and 2 below). We refer to the mathematical appendix for a quick overview on directed and undirected graphs, d-separation, the causal Markov property, Faithfulness and causal sufficiency. We will always assume that a statistical association $\X \centernot\ind \Y$ is present in the data.

\paragraph{Model 1 (Unidirectional Causal Vector Model)}
The scalar variables $X_1,\dots,X_n,Y_1,\dots,Y_m$ are represented as the nodes of a directed acyclic graph (DAG) $\G$. In addition, we assume that 
    \begin{itemize}
         \item[\textbf{(A1)}] The joint distribution $P_{\X,\Y}$ fulfills the causal Markov property and is faithful to $\G$. In other words, d-separation in $\G$ completely characterizes the conditional independencies present in $P_{\X,\Y}$. This implies that the model is causally sufficient, i.e. no hidden confounders are present. 
         \item[\textbf{(A2)}] Arrows between groups only point in one direction, i.e., without loss of generality, the $\X \rightarrow \Y$-direction. In other words, there can be no arrow $X_k \leftarrow Y_{\ell}$ for any $X_k \in \X, Y_{\ell} \in \Y$. 
    \end{itemize}
    
%\paragraph{Setup 2}
%The scalar variables $X_1,\dots,X_n,Y_1,\dots,Y_m$ are represented as a mixed graph $\cH$. In addition, we assume that 
%    \begin{itemize}
%         \item[S2A1] The joint distribution $P_{\X,\Y}$ has the Markov property and is faithful on the mixed graph $\cH$. In other words, m-separation on $\H$ completely characterizes the conditional independencies present in $P_{\X,\Y}$.
%    \end{itemize}

%Note that Setup 1 is a special case of Setup 2

\noindent We will now justify principles (P1) and (P2) in this scenario. The main result of this section will be summarized in Theorem \ref{thm.main}. Proofs of the following results will be given in the technical appendix. 

\begin{lemma} \label{thm.principle1}
Assume that the assumptions of Model 1 are satisfied. Then principle (P1) holds in the sense that there is no subset $\cS \subset \Y$ of the effect group and nodes $Y_k, Y_{\ell} \in \Y$ such that
\begin{align*}
    Y_k \ind Y_{\ell} \ | \ \cS \qquad \text{and} \qquad Y_k \centernot \ind Y_{\ell} \ | \ \cS, \X.
\end{align*}
\end{lemma}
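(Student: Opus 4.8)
The plan is to translate the probabilistic statement into a graphical one via Faithfulness and then argue by contradiction at the level of d-separation. By assumption (A1), a conditional independence $Y_k \ind Y_\ell \mid \cS$ holds exactly when $Y_k$ and $Y_\ell$ are d-separated given $\cS$ in $\G$, and likewise with conditioning set $\cS \cup \X$. So it suffices to prove the purely graph-theoretic claim that whenever $Y_k$ and $Y_\ell$ are d-separated given $\cS$, they remain d-separated given $\cS \cup \X$; that is, adding the whole cause group $\X$ to the conditioning set cannot d-connect two effect nodes. I would suppose, for contradiction, that some path $\pi$ between $Y_k$ and $Y_\ell$ is blocked given $\cS$ but active (d-connecting) given $\cS \cup \X$.

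First I would pin down why $\pi$ was blocked before. Since $\pi$ is active given $\cS \cup \X$, no non-collider of $\pi$ lies in $\cS \cup \X$, and in particular none lies in $\cS$. Hence the only way $\pi$ could have been blocked given $\cS$ is through a collider $C$ on $\pi$ that is not activated by $\cS$ (neither $C$ nor any descendant of $C$ lies in $\cS$) but is activated by $\cS \cup \X$. Thus $C$ or some descendant of $C$ lies in $\X$. Now assumption (A2) enters: because no between-group edge points from $\Y$ into $\X$, no node of $\Y$ can be an ancestor of a node of $\X$, since any directed path from a $\Y$-node to an $\X$-node would have to traverse a forbidden $Y \to X$ edge. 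As $C$ equals or is an ancestor of a node in $\X$, this forces $C \in \X$.

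The contradiction then comes from inspecting the neighbours of this collider. Write the relevant segment of $\pi$ as $A \to C \leftarrow B$, so both path-edges point into $C \in \X$. By (A2) again, an edge $A \to C$ with $C \in \X$ cannot originate in $\Y$, so $A \in \X$ (and likewise $B \in \X$). Because the endpoints $Y_k, Y_\ell$ lie in $\Y$ while $A \in \X$, the node $A$ is an interior node of $\pi$; and since the edge $A \to C$ points away from $A$, the node $A$ is a non-collider on $\pi$. But then $A$ is a non-collider lying in the conditioning set $\cS \cup \X$, which blocks $\pi$, contradicting that $\pi$ is active given $\cS \cup \X$. Hence no such path exists, and the claim follows.

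The main obstacle, and the only genuinely delicate point, is the bookkeeping in the middle step: one must invoke (A2) twice and in two different guises, once to conclude that a collider with a descendant in $\X$ must itself sit in $\X$ (the ``no $\Y$-to-$\X$ ancestry'' consequence), and once to conclude that its path-neighbours also sit in $\X$, all while checking that the activation of $C$ is genuinely \emph{new} (caused by $\X$, not already present via $\cS$). Everything else is a direct application of the standard fact that enlarging a conditioning set can open a path only through colliders, so each newly opened collider necessarily drags in blocking non-collider neighbours and no net d-connection survives.
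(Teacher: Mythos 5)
Your proof is correct and follows essentially the same route as the paper's: a proof by contradiction at the level of d-separation, using (A2) to locate a non-collider belonging to $\X$ on the putatively active path between $Y_k$ and $Y_\ell$, which then blocks that path under conditioning on $\cS \cup \X$. If anything, your bookkeeping is slightly more careful than the paper's own argument: the paper asserts that the newly opened path ``has to pass through $\X$'' without explicitly ruling out activation of a collider inside $\Y$ via a descendant in $\X$, a case you dispose of cleanly with the no-$\Y$-to-$\X$-ancestry consequence of (A2).
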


\noindent Next, we say that a causal vector model $\X \to \Y$ satisfies condition
\begin{itemize}
    \item[\textbf{(C1)}] if there is a subset $\cS \subset \X$ and scalar variables $X_i,X_j \in \X$ such that
\begin{align*}
    X_i \ind X_j \ | \ \cS \qquad \text{and} \qquad X_i \centernot \ind X_j \ | \ \cS, \Y. %\label{condition1}
\end{align*}
\end{itemize}

In the appendix, we will characterize (C1) graphically and depict some motivational examples. For instance, Condition (1) is satisfied only if there exists a \emph{cross-regional v-structure} $X_i \rightarrow Y_k \leftarrow X_j$. We can now deduce the following result for cause-effect identification. It states that whenever conditioning on a group creates dependencies within the other group, by (P1) the former must be the effect and the latter must be the cause.

\begin{corollary} \label{thm.ident1}
Assume that the assumptions of Model 1 as well as (C1) are satisfied. Then, the causal direction $\X \rightarrow \Y$ can be inferred from the observational distribution $P_{\X,\Y}$.  
\end{corollary}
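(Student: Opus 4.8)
The plan is to obtain the corollary from Lemma \ref{thm.principle1} by a short contraposition argument, using the fact that (C1) is a purely distributional statement. First I would observe that (C1) is phrased entirely in terms of conditional (in)dependencies among the scalar variables, so whether it holds is determined by the observational distribution $P_{\X,\Y}$ alone; it is therefore legitimate to use (C1) as an input when inferring the direction from $P_{\X,\Y}$. By the one-directionality of the between-group arrows (A2) together with the standing assumption $\X \centernot\ind \Y$, the true causal direction is either $\X \to \Y$ or $\Y \to \X$, so it suffices to rule out the latter.

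Next I would argue by contradiction. Suppose the true direction were $\Y \to \X$, so that $\Y$ is the cause group and $\X$ the effect group. The assumptions (A1) and (A2) of Model 1 are symmetric under interchanging the two groups, hence the hypotheses of Lemma \ref{thm.principle1} are satisfied with the roles of $\X$ and $\Y$ swapped. Applying the lemma in this reversed form — principle (P1) for the cause group $\Y$ — yields that there is no subset $\cS \subset \X$ and no pair $X_i, X_j \in \X$ with $X_i \ind X_j \mid \cS$ and $X_i \centernot\ind X_j \mid \cS, \Y$; that is, conditioning on the cause group $\Y$ can never create a new dependence inside the effect group $\X$.

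This directly contradicts (C1), which asserts the existence of exactly such a subset $\cS \subset \X$ and pair $X_i, X_j$. Hence the direction $\Y \to \X$ is impossible, and by (A2) the only remaining possibility is $\X \to \Y$, which proves the claim. I expect the single point requiring care — rather than a genuine obstacle — to be the symmetric invocation of Lemma \ref{thm.principle1}: one must verify that neither the lemma's statement nor Model 1 privileges the label $\X$ for the cause, so that the lemma applies verbatim to whichever group is causally upstream. All of the substantive work is already carried by Lemma \ref{thm.principle1}, and the corollary follows as a one-line logical consequence via contraposition.
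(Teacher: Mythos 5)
Your proposal is correct and takes essentially the same route as the paper: the paper's proof of Corollary \ref{thm.ident1} is precisely the observation that it ``follows directly from Lemma \ref{thm.principle1},'' i.e.\ the contraposition you spell out, whereby (C1) exhibits a dependence inside $\X$ created by conditioning on $\Y$, which Lemma \ref{thm.principle1} (applied with the group roles swapped) forbids whenever $\Y$ is the cause, so the direction must be $\X \rightarrow \Y$. Your added check that Model 1's assumptions are symmetric under interchanging the groups, so the lemma applies verbatim to whichever group is upstream, is exactly the point the paper leaves implicit, and it holds.
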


\noindent Next, we justify principle (P2).

\begin{lemma} \label{thm.principle2}
Assume that the assumptions of Model 1 are satisfied. Then principle (P2) holds, in the sense that there is no subset $\cS \subset \X$ of the cause group and nodes $X_i, X_{j} \in \X$ such that
\begin{align*}
    X_i \centernot\ind X_{j} \ | \ \cS \qquad \text{and} \qquad X_i \ind X_{j} \ | \ \cS, \Y.
\end{align*}
\end{lemma}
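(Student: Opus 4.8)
The plan is to prove the contrapositive purely graphically. By Faithfulness together with the causal Markov property (A1), a conditional independence holds in $P_{\X,\Y}$ if and only if the corresponding d-separation holds in $\G$. Hence the lemma is equivalent to the statement that there is no subset $\cS\subset\X$ and no pair $X_i,X_j\in\X$ that are d-connected given $\cS$ but d-separated given $\cS\cup\Y$. First I would assume, for contradiction, the existence of such $\cS,X_i,X_j$ and fix an active path $\pi$ between $X_i$ and $X_j$ given $\cS$; the goal is to show $\pi$ remains active once $\Y$ is added to the conditioning set, which would contradict the assumed d-separation.

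The crucial structural input is (A2): every edge joining $\X$ and $\Y$ is oriented from $\X$ to $\Y$. I would record two consequences. First, any out-edge of a $\Y$-node stays within $\Y$, so every directed path leaving $\Y$ stays in $\Y$ and hence every descendant of a $\Y$-node lies in $\Y$. Second, any $\Y$-node appearing as a collider on $\pi$ is then blocked given $\cS$: it is not in $\cS\subset\X$, and by the previous point it has no descendant in $\cS$. Therefore every $\Y$-node on $\pi$ must occur as a non-collider.

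Next I would argue that $\pi$ cannot meet $\Y$ at all. Consider a maximal run of consecutive $\Y$-nodes on $\pi$; since the endpoints $X_i,X_j$ lie in $\X$, such a run is flanked on both sides by $\X$-nodes, and by (A2) both flanking edges carry their arrowhead at the $\Y$-endpoint (they point from $\X$ into $\Y$). Walking along the run from the first $\Y$-node and using that each $\Y$-node is a non-collider forces every internal edge to point forward, so the run is a directed chain $Y_{a_1}\to\cdots\to Y_{a_k}$; but then $Y_{a_k}$ receives an arrowhead both from $Y_{a_{k-1}}$ and from its flanking $\X$-node, making it a collider, a contradiction (the degenerate case $k=1$ is immediate, as a single flanked $\Y$-node is then a collider). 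Hence $\pi$ lies entirely within $\X$.

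Finally I would check that such a path survives the enlarged conditioning set. Its colliders already have a descendant in $\cS\subseteq\cS\cup\Y$, so they stay open; its non-colliders lie in $\X\setminus\cS$ and, being in $\X$, are not in $\Y$, so they remain unconditioned. Thus $\pi$ is active given $\cS\cup\Y$, so $X_i$ and $X_j$ are d-connected given $\cS\cup\Y$, contradicting the assumed d-separation and, via Faithfulness, contradicting $X_i\ind X_j\mid\cS,\Y$. I expect the orientation bookkeeping in the $\Y$-segment step to be the main obstacle, as it requires the careful induction on edge directions combined with the boundary-orientation constraint from (A2); the remaining steps are direct applications of the definition of an active path.
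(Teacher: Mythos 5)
Your proof is correct and follows essentially the same route as the paper's: assume the two (in)dependencies, translate them via (A1) into d-separation statements, use (A2) to show that any path active given $\cS$ cannot pass through $\Y$ (the paper asserts in one line that such a path would contain a collider in $\Y$, which you prove in detail via the maximal-run argument), and conclude that the path stays active given $\cS\cup\Y$, yielding the contradiction. The only difference is that you spell out the orientation bookkeeping that the paper leaves implicit.
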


\noindent Again, we need to ensure that conditioning on the cause vector \emph{does} delete dependencies within the effect vector. We therefore say that a causal vector model $\X \to \Y$ satisfies
\begin{itemize}
    \item[\textbf{(C2)}] if there is a subset $\cS \subset \Y$ and scalar variables $Y_k,Y_{\ell} \in \Y$ such that
\begin{align*}
    Y_k \centernot\ind Y_{\ell} \ | \ \cS \qquad \text{and} \qquad Y_k \ind Y_{\ell} \ | \ \cS, \X. %\label{condition2}
\end{align*}
\end{itemize}

\noindent For example, condition (C2) is satisfied if $Y_k, Y_{\ell}$ can be d-separated by $\cS \subset \Y$ \emph{in the subgraph over} $\Y$ and there is a common confounder $Y_k \leftarrow X_i \rightarrow Y_{\ell}$. Again, we will provide a full graphical characterization of (C2) and some examples in the appendix.

\begin{corollary} \label{thm.ident2}
Assume that the assumptions of Model 1 as well as (C2) are satisfied. Then, the causal direction $\X \rightarrow \Y$ can be inferred from the observational distribution $P_{\X,\Y}$. 
\end{corollary}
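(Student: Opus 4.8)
The plan is to mirror the proof of Corollary \ref{thm.ident1}, but to invoke principle (P2) (Lemma \ref{thm.principle2}) in place of (P1), and to argue by ruling out the reverse direction. Under Model 1 the observed association $\X \centernot\ind \Y$ admits only two unidirectional explanations, $\X \rightarrow \Y$ or $\Y \rightarrow \X$; since condition (C2) is phrased purely through conditional (in)dependence relations, it is a property of $P_{\X,\Y}$ that can be evaluated without knowing the direction in advance. Hence, to establish identifiability, it suffices to show that whenever (C2) holds the direction $\Y \rightarrow \X$ is impossible, leaving $\X \rightarrow \Y$ as the only admissible explanation.

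First I would record the symmetry of the two hypotheses: assumptions (A1) and (A2) are invariant under relabelling the two groups, because (A1) only asserts that $P_{\X,\Y}$ is Markov and faithful to the underlying DAG, and (A2) merely fixes the single admissible between-group orientation. Consequently, if the generating model had direction $\Y \rightarrow \X$, it would again be an instance of Model 1, now with cause group $\Y$ and effect group $\X$. This is what licenses reusing the earlier lemma for the swapped roles.

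Then I would apply Lemma \ref{thm.principle2} to this reverse model after interchanging the symbols $\X \leftrightarrow \Y$. Principle (P2) for the reverse model states that there is no subset $\cS \subset \Y$ and nodes $Y_k, Y_\ell \in \Y$ with $Y_k \centernot\ind Y_\ell \ | \ \cS$ and $Y_k \ind Y_\ell \ | \ \cS, \X$, which is exactly the negation of (C2). This contradiction forces the conclusion $\X \rightarrow \Y$. The only delicate step, and the one I would make explicit, is the legitimacy of applying Lemma \ref{thm.principle2} to the reverse hypothesis: everything hinges on the relabelling symmetry of (A1)--(A2), after which the argument reduces to a purely logical contrapositive and requires no further d-separation analysis, the latter having already been carried out in the proof of Lemma \ref{thm.principle2}.
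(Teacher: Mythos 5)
Your proposal is correct and follows essentially the same route as the paper: the paper states that Corollary \ref{thm.ident2} follows immediately from Lemma \ref{thm.principle2}, and your argument simply makes explicit the underlying logic---namely that (C2), being a purely distributional condition, contradicts principle (P2) applied to the hypothetical reverse model $\Y \rightarrow \X$, so that direction is ruled out. Your attention to the relabelling symmetry of (A1)--(A2) is exactly the step the paper leaves implicit in its ``follows immediately.''
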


\noindent We summarize the results above in the following theorem.

\begin{theorem} \label{thm.main}
Assume that the assumptions of Model 1 are satisfied and that at least one of the conditions (C1) or (C2) holds. Then. the causal direction $\X \rightarrow \Y$ can be inferred from the observational distribution $P_{\X,\Y}$.
\end{theorem}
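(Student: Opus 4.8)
The plan is to obtain Theorem \ref{thm.main} as an immediate case distinction layered on top of Corollaries \ref{thm.ident1} and \ref{thm.ident2}, which already establish identifiability of the direction $\X \to \Y$ under (C1) and under (C2) respectively. Since the hypothesis guarantees that at least one of the two conditions holds, the entire argument reduces to selecting the applicable corollary and checking that, whichever condition is realized, the inferred orientation is the same and the resulting inference rule is well defined as a function of $P_{\X,\Y}$.

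First I would fix the observable criteria that the two corollaries exploit. For groups $A,B \in \{\X,\Y\}$, say that conditioning on $B$ \emph{creates} a dependence inside $A$ if there are $a_1,a_2 \in A$ and a subset $\cS \subset A$ with $a_1 \ind a_2 \mid \cS$ but $a_1 \centernot\ind a_2 \mid \cS, B$, and that it \emph{deletes} a dependence inside $A$ if there are such $a_1,a_2,\cS$ with $a_1 \centernot\ind a_2 \mid \cS$ but $a_1 \ind a_2 \mid \cS, B$. In this vocabulary, (C1) is exactly the statement that conditioning on $\Y$ creates a dependence inside $\X$ (a cross-regional v-structure $X_i \to Y_k \leftarrow X_j$), while (C2) is the statement that conditioning on $\X$ deletes a dependence inside $\Y$ (a common confounder $Y_k \leftarrow X_i \to Y_\ell$). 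All such statements are determined by $P_{\X,\Y}$ alone, so any decision rule phrased in their terms is an admissible inference from the observational distribution.

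With this setup the case split is clean. If (C1) holds, Corollary \ref{thm.ident1} turns the created dependence inside $\X$ into the conclusion $\X \to \Y$; if instead (C2) holds, Corollary \ref{thm.ident2} turns the deleted dependence inside $\Y$ into the same conclusion $\X \to \Y$. Because the hypothesis asserts that (C1) or (C2) is satisfied, one of these two branches always applies, and the theorem follows.

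The one point that genuinely needs care — and the step I would treat as the main obstacle — is verifying that the two branches never disagree, so that the verdict "$\X \to \Y$" is unambiguous even when both (C1) and (C2) happen to hold simultaneously. Here the underlying lemmas do the work and guarantee the required asymmetry in a single orientation: Lemma \ref{thm.principle1} forbids conditioning on the cause group from creating a dependence inside the effect group, and Lemma \ref{thm.principle2} forbids conditioning on the effect group from deleting a dependence inside the cause group. I would make this explicit by contrapositive: if the true orientation were $\Y \to \X$, then Lemma \ref{thm.principle1} applied with the roles of the groups exchanged would rule out the creation signature (C1), and Lemma \ref{thm.principle2} applied likewise would rule out the deletion signature (C2). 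Hence neither condition can be produced by the reverse model, which is precisely what makes both branches point consistently to $\X \to \Y$ and completes the argument.
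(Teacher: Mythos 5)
Your proposal is correct and matches the paper's approach: the paper also obtains Theorem \ref{thm.main} as an immediate consequence of Corollaries \ref{thm.ident1} and \ref{thm.ident2} (which in turn rest on Lemmas \ref{thm.principle1} and \ref{thm.principle2}), exactly the case distinction you describe. Your extra step of checking via the contrapositive that neither (C1) nor (C2) can arise under the reverse orientation $\Y \to \X$ is not a different route but a spelled-out version of the same asymmetry argument that underlies the paper's corollaries.
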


\paragraph{Model 2 (Unidirectional Semi-Causal Vector Model)}
Model 2 assumes that the variables $X_1,\dots,X_n,Y_1,\dots,Y_m$ are generated by the \emph{semi-causal} structural causal model
\begin{align*}
\X \ &:= \ \eta_{\X}, \\
\Y \ &:= \ \mathbf{f}(\X,\eta_{\Y}), \qquad \eta_{\X} \ind \eta_{\Y},
\end{align*}
where as mentioned before, we do \emph{not} assume that the components within the noise terms $\eta_{\X}, \eta_{\Y}$ are pairwise independent. We can encode the conditional independencies within the $\X$-group graphically by drawing an undirected edge $X_i -X_j$ if and only if 
\begin{align*}
X_i\centernot\ind X_j \ | \ \X\backslash \{X_i\, X_j\}
\end{align*}   
and similarly within the $\Y$-group by drawing an undirected edge $Y_k-Y_{\ell}$ if and only if 
\begin{align*}
\eta_k \centernot\ind \eta_{\ell} \ | \ \eta \backslash \{\eta_k , \eta_{\ell}\}.
\end{align*} 
The undirected graphs obtained in this way will be denoted by $\G'_{\X}, \G'_{\Y}$ respectively.

In this way, we have encoded the distributions of $\X,\Y$ as \emph{Markov random fields} over undirected graphs. Markov random fields of this kind are sometimes employed to model spatial or spatio-temporal measurements on grids \citep{Song_et_al2008} such as for instance surface temperature measurements \citep{Vaccaro_et_al21}.

In this setup, it is harder to find exact conditions that formally imply principles (P1), (P2). We lack a non-finetuning statement that is as general as faithfulness in the causal setting and quantifying such a statement would probably require additional assumptions on the functional form of the model or the noise distribution.

Let us illustrate why it is nevertheless reasonable to accept (P1) and (P2) with the following toy example:
\begin{exmp}
Consider the model 
\begin{align*}
\begin{pmatrix}Y_1\\Y_2\end{pmatrix} = \begin{pmatrix}b X_1 + \eta_1 \\c X_2 + \eta_2\end{pmatrix},
\end{align*}
where $(X_1,X_2)$ are jointly normal with mean $\mathbb{E}[\X]= \mathbf{0}$, $\mathrm{Var}(X_1) = \mathrm{Var}(X_2) =1 $ and $\mathrm{Cov}(X_1,X_2) = a$. The error terms $(\eta_1,\eta_2)$ are jointly normal with mean $\mathbb{E}[\eta]=\mathbf{0}, \mathrm{Var}(\eta_1) = \mathrm{Var}(\eta_2) =1$ and $\mathrm{Cov}(\eta_1,\eta_2) = d$. The only way conditioning on $\X$ could create a dependency in $\Y$ would be if $Y_1 \ind Y_2$ in the first place, which is equivalent to $abc = -d$. Thus conditioning on the cause can only create dependencies out of independencies that arose from a finetuning of the coefficients of the mechanism $b,c$ to the coefficients of the noise terms $a,d$.

Similarly, for (P2) to be violated in this example, conditioning on $\Y$ would have to delete the dependency of $X_1$ and $X_2$. This would entail another (more involved) algebraic equation that the coefficients would have to satisfy. Thus coefficients describing the causal mechanism could not be chosen independently of the noise terms. 
\end{exmp}

\subsubsection{Graph edge density criterion for identifying causal direction}

A practical way to make use of Theorem \ref{thm.main} is to compare the edge densities of the internal graphs of one group before and after conditioning on the other group. Say we are interested in these internal graphs of the vector $\X$ in Model 1 where internal graphs are formalized as DAGs. In this case we then compare the number of edges in the (skeleton of the) CPDAG $\G_{\X}$ that is the output of the PC-algorithm run on the scalar variables in $\X$ to the number of edges in the CPDAG $\G_{\X| \Y}$ that is the output of the PC-algorithm over $\X$ in which $\Y$ is added as a conditioning set to every independence test. We normalize these edge counts by the maximal possible number of edges $\mathrm{edgeMax} = n(n-1)/2$ to obtain edge densities
\begin{align*}
\mathrm{edgeDens}(\G_{\X}) &= \frac{\text{number of edges of } \G_{\X}}{\mathrm{edgeMax}} \\ \mathrm{edgeDens}(\G_{\X|\Y}) &= \frac{\text{number of edges of } \G_{\X|\Y}}{\mathrm{edgeMax}}.
\end{align*}
The edge densities $\mathrm{edgeDens}(\G_{\Y}), \ \mathrm{edgeDens}(\G_{\Y|\X})$ are defined analogously. Presuming that the PC-algorithm is run with perfect oracle independence tests, Conditions (C1) and (C2) then have the following implications.

\begin{theorem} \label{thm.edgedensity1}
As before we assume the causal direction to be $\X \rightarrow \Y$ and that the assumptions of Model 1 are satisfied. If moreover condition (C1) holds, then
\begin{align}\label{eq.d_1}
d(\X|\Y) := \mathrm{edgeDens}(\G_{\X|\Y}) - \mathrm{edgeDens}(\G_{\X}) > 0.
\end{align}
If condition (C2) holds, then
\begin{align}\label{eq.d_2}
d(\Y|\X):= \mathrm{edgeDens}(\G_{\Y|\X}) - \mathrm{edgeDens}(\G_{\Y}) < 0.
\end{align}
In either case, we have $d(\X|\Y) > d(\Y|\X)$ and the causal direction can thus be inferred from the sign of $d(\X|\Y) - d(\Y|\X)$.
\end{theorem}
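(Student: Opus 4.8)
The plan is to reduce every claim to a statement about adjacency in the oracle PC output, i.e.\ to d-separation in $\G$, and then to split each inequality into a \emph{monotonicity} part (which fixes the sign weakly) and a \emph{strictness} part (supplied by (C1) or (C2)). With oracle tests the skeleton of $\G_{\X}$ contains $X_i-X_j$ iff no $\cS \subseteq \X \setminus \{X_i,X_j\}$ gives $X_i \ind X_j \mid \cS$, whereas $\G_{\X|\Y}$ contains it iff no such $\cS$ gives $X_i \ind X_j \mid \cS, \Y$; the graphs $\G_{\Y}, \G_{\Y|\X}$ are read off symmetrically. Since all four graphs are normalised by the same $\mathrm{edgeMax}$ within each group, it suffices to compare raw edge counts.

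First I would establish the two monotonicity statements directly from the already-proven Lemmas \ref{thm.principle1} and \ref{thm.principle2}. If $X_i-X_j$ were an edge of $\G_{\X}$ but not of $\G_{\X|\Y}$, then some $\cS$ would give $X_i \ind X_j \mid \cS, \Y$ while $X_i \centernot\ind X_j \mid \cS$, which is exactly the configuration excluded by (P2); hence the edge set of $\G_{\X}$ is contained in that of $\G_{\X|\Y}$ and $d(\X|\Y) \ge 0$. The mirror argument using (P1) shows the edge set of $\G_{\Y|\X}$ is contained in that of $\G_{\Y}$, so $d(\Y|\X) \le 0$. These two facts already give $d(\X|\Y) \ge 0 \ge d(\Y|\X)$ under Model 1 alone.

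Second, for the strict inequalities I would invoke the graphical characterisations of (C1) and (C2). Condition (C1) forces a cross-regional v-structure $X_i \to Y_k \leftarrow X_j$; since $X_i,X_j$ are then non-adjacent in $\G$, they are d-separable within $\X$, because every path that leaves $\X$, passes through $\Y$, and returns to $\X$ must carry a $\Y$-collider that no within-$\X$ set can open, so d-separation in $\G$ by an $\X$-set coincides with d-separation in the induced DAG over $\X$. Thus $X_i-X_j \notin \G_{\X}$. But conditioning on $\Y$ always opens $Y_k$, and this single-node collider path cannot be reblocked by any $\cS \subseteq \X$, so $X_i-X_j \in \G_{\X|\Y}$: a genuinely new edge, giving $d(\X|\Y) > 0$. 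Dually, the characterisation of (C2) furnishes a cross-regional confounder $Y_k \leftarrow X_i \to Y_\ell$ whose fork is active for every within-$\Y$ conditioning set, as its pivot $X_i$ lies outside $\Y$; hence $Y_k-Y_\ell \in \G_{\Y}$, while (C2) itself exhibits a set $\cS$ with $Y_k \ind Y_\ell \mid \cS, \X$, so $Y_k-Y_\ell \notin \G_{\Y|\X}$: a deleted edge, giving $d(\Y|\X) < 0$.

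Combining, whenever (C1) or (C2) holds at least one of the two weak inequalities becomes strict, so $d(\X|\Y) > d(\Y|\X)$. Running the identical computation under the hypothesis $\Y \to \X$ swaps the roles of the groups and reverses every sign, so $\operatorname{sign}\big(d(\X|\Y) - d(\Y|\X)\big)$ identifies the causal direction. The main obstacle is not the monotonicity, which is immediate from (P1)/(P2), but the strictness: one must control \emph{all} conditioning sets simultaneously, i.e.\ show that the created collider edge is inseparable by every $\X$-set and that the deleted fork edge is inseparable by every $\Y$-set. This is precisely where the observation that the pivotal node of a cross-group path lies in the opposite group does the work, since such a node is either always conditioned (a collider, hence permanently opened) or never conditioned (a fork, hence permanently active).
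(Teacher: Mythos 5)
Your decomposition is the same as the paper's: the weak inequalities $d(\X|\Y)\ge 0$ and $d(\Y|\X)\le 0$ follow from Lemmas \ref{thm.principle1} and \ref{thm.principle2} by containment of the oracle PC skeletons, and (C1), (C2) are then supposed to supply strictness. Your containment argument is correct, and is in fact more explicit than the one-line version in the paper's proof. The genuine gap is in the strictness step.

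You reduce strictness to two motifs: (C1) allegedly forces a cross-regional v-structure $X_i \rightarrow Y_k \leftarrow X_j$ \emph{on the witnessing pair}, and (C2) allegedly furnishes a direct confounder $Y_k \leftarrow X_i \rightarrow Y_\ell$. Neither reduction is valid. First, (C1) can hold although no node of $\Y$ has two parents in $\X$: take $X_1 \rightarrow X_3 \leftarrow X_2$ with $X_3 \rightarrow Y_1$. Here $X_1 \ind X_2$ but $X_1 \centernot\ind X_2 \mid \Y$, because conditioning on $\Y$ opens the collider $X_3$ lying \emph{inside} $\X$ via its descendant $Y_1$. The pivot of the connecting path is thus a collider in $\X$ that is an ancestor of $\Y$, opened by conditioning on $\Y$ no matter what $\cS'$ does --- a third case that your closing dichotomy (pivot in the opposite group, hence always or never conditioned) does not treat, and for which the v-structure your argument needs simply does not exist. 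Second, even when a cross-regional v-structure does exist, it need not involve the witnessing pair, and that pair need not gain an edge: with $X_1 \rightarrow X_3 \rightarrow Y_1 \leftarrow X_2$, the pair $(X_1,X_2)$ witnesses (C1) with $\cS=\emptyset$, yet $X_1 \ind X_2 \mid X_3, \Y$, so $X_1 - X_2$ is \emph{not} an edge of $\G_{\X|\Y}$; the created edge sits between $X_2$ and $X_3$ instead. Third, for (C2) the cross-group path of Lemma \ref{lem.graphicalcharacterization2} may be long and may contain colliders in $\X$ opened by descendants in $\cS \subseteq \Y$: with edges $X_1 \rightarrow Y_1$, $X_1 \rightarrow X_2 \leftarrow X_3$, $X_3 \rightarrow Y_2$, $X_2 \rightarrow Y_3$, condition (C2) holds for $(Y_1,Y_2)$ with $\cS=\{Y_3\}$, but no direct confounder exists and $Y_1 - Y_2$ is not even an edge of $\G_{\Y}$ (the empty set already separates that pair); the deleted edges are $Y_1 - Y_3$ and $Y_2 - Y_3$. (Note that the paper's informal remark that (C1) requires a cross-regional v-structure, and the only-if direction of Lemma \ref{lem.graphicalcharacterization} via its condition (ii), are contradicted by the first example as well, so citing them would not close the gap.)

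What strictness actually requires --- and what you correctly identify as the crux but do not deliver --- is \emph{some} pair, not necessarily the witnessing one, whose edge status changes simultaneously for \emph{all} conditioning sets. One way to obtain it from (C1): take the path d-connecting $X_i,X_j$ given $\cS \cup \Y$ and repeatedly shortcut shielded collider triples, i.e.\ replace $U \rightarrow W \leftarrow V$ by the edge $U - V$ whenever $U,V$ are adjacent; one checks that openness given $\cS\cup\Y$ is preserved. The procedure terminates either in an unshielded triple $U \rightarrow W \leftarrow V$ with $U,V \in \X$ non-adjacent and $W$ a node of $\Y$ or an ancestor of $\Y$ --- for such a triple your permanently-open-pivot argument does work and yields $U - V \in \G_{\X|\Y}$ while $U - V \notin \G_{\X}$ --- or in a path that is open given $\cS$ alone, contradicting $X_i \ind X_j \mid \cS$. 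In fairness, the paper's own proof merely asserts strictness, so the quantifier problem is not resolved there either; but your reduction to direct v-structures and forks at the witnessing pair is demonstrably insufficient.
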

As a consequence, when the causal direction is unknown, we can infer it from the observational distribution  by computing $d(\X|\Y)$ and $d(\Y|\X)$ and choosing $\X$ as the cause if the former is larger and $\Y$ if the latter is larger. Note that this approach also works when $\X$ and $\Y$ have different internal edge densities.

\noindent In Model 2 we have to replace $\G_{\X}$ by the undirected conditional independence graph  $\G'_{\X}$ and $\G_{\X|\Y}$ by the graph $\G'_{\X|\Y}$ that has edges $X_i - X_j$ iff 
\[X_i \centernot\ind X_j | \X \backslash \{X_i,X_j \}, \Y. \] 

If we now replace $d(\X | \Y), d(\Y | \X)$ by

\begin{align}\label{eq.d_hat_1}
d'(\X|\Y) := \mathrm{edgeDens}(\G'_{\X|\Y}) - \mathrm{edgeDens}(\G'_{\X}),
\end{align}
and
\begin{align}\label{eq.d_hat_2}
d'(\Y|\X):= \mathrm{edgeDens}(\G'_{\Y|\X}) - \mathrm{edgeDens}(\G'_{\Y}),
\end{align}

we still observe empirically (see below) that the sign of $d'(\X|\Y) - d'(\Y|\X)$ is able to read the causal direction from the observational distribution quite efficiently.
 
%The analogue of Theorem \ref{thm.edgedensity1} thus becomes
%\begin{theorem} \label{thm.edgedensity2}
%Assume the causal direction to be $\X \rightarrow \Y$ and that the assumptions of setup 2 are satisfied. If moreover condition \ref{condition1} holds, then
%\begin{align}
%d'(\X|\Y) := \mathrm{edgeDens}(\G'_{\X|\Y}) - \mathrm{edgeDens}(\G'_{\X}) > 0.
%\end{align}
%If condition \ref{condition2} holds, then
%\begin{align}
%d'(\Y|\X):= \mathrm{edgeDens}(\G'_{\Y|\X}) - \mathrm{edgeDens}(\G'_{\Y}) < 0.
%\end{align}
%In either case, we have $d'(\X|\Y) > d'(\Y|\X)$.
%\end{theorem} 

\section{Algorithms for Cause-Effect Identification} \label{sec.algorithms}
We now present two algorithms for cause-effect identification that are tailored to the two different models. 2G-VecCI.PC is particularly useful if the user believes the assumption of Model 1 to be valid. Recall that the PC-algorithm is an algorithm for causal discovery on scalar variables that consists of a \textbf{skeleton phase} to identify the skeleton of the causal graph and an orientation phase to determine causal directions. For the computation of edge densities, 2G-VecCI.PC will use the PC-algorithm's skeleton phase.
\RestyleAlgo{ruled}
\begin{algorithm}[!h]
\LinesNumbered
\caption{2G-VecCI.PC}\label{pseudocode1}
\KwData{two arrays containing samples of $\X$ and $\Y$, parameter $\alpha \in [0,1]$.}
\KwResult{variable with values '$\X$ is the cause of $\Y$', or '$\Y$ is the cause of $\X$', or 'Causal direction cannot be determined'.}
Run \textbf{skeleton phase} on the components of $\X$\;
Compute $\widehat{\mathrm{edgeDens}(\G_{\X})} = \frac{\text{number of edges found in skeleton phase }}{\mathrm{edgeMax}}$\; 
Run \textbf{skeleton phase} on the components of $\X$\ with $\Y$ added to the conditioning set of every independence test\;
Compute $\widehat{\mathrm{edgeDens}(\G_{\X|\Y})}$ and
$\widehat{d(\X|\Y)} = \widehat{\mathrm{edgeDens}(\G_{\X|\Y})} - \widehat{\mathrm{edgeDens}(\G_{\X})}$\;
Repeat (1) to (4) with exchanged roles of $\X,\Y$ to get $\widehat{d(\Y|\X)}$\;
Compute $\mathrm{Crit} = \widehat{d(\X|\Y)} - \widehat{d(\Y|\X)}$\;
\lIf{ $|\mathrm{Crit}| < \alpha$}{return \\ 'Causal direction cannot be determined'}
\lIf{ $\mathrm{Crit} > \alpha$}{return '$\X$ is the cause of $\Y$'}
\lIf{ $\mathrm{Crit} < -\alpha$}{return '$\Y$ is the cause of $\X$'}
\end{algorithm}

The parameter $\alpha \in [0,1]$ is a sensitivity parameter that controls the algorithm's agnosticism. If $\alpha$ is chosen large, then it will return a definite result only in clear cut cases. 
%We recommend to choose its value equal or higher than that of the significance level $\tilde{\alpha}$ of the conditional independence tests as a protection against wrongly detected/deleted edges. Ultimately however, $\alpha$ is a hyperparameter that needs to be chosen by the user based on domain expertise.

If 2G-VecCI.PC is run with a consistent independence test, i.e. with one that recovers conditional independence relations perfectly in the infinite sample limit, it consistently estimates $d(\X|\Y)$ and $d(\Y|\X)$. Note that this also applies to nonlinear relations, our approach does not rely on functional assumptions. Therefore, the algorithms consistency follows directly from Theorem \ref{thm.edgedensity1}.

\begin{corollary} \label{cor.consistency}
If the assumptions of Model 1 are satisfied and if at least one of (C1) or (C2) holds, then 2G-VecCI.PC run with a consistent CI test returns the correct causal direction in the infinite sample limit.
\end{corollary}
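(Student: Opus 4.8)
The plan is to reduce the statement entirely to the oracle identifiability result of Theorem \ref{thm.edgedensity1} by showing that, in the infinite sample limit, the statistic $\mathrm{Crit}$ computed by the algorithm converges to the population quantity $d(\X|\Y) - d(\Y|\X)$, which that theorem guarantees is strictly positive. The only new content relative to Theorem \ref{thm.edgedensity1} is the passage from oracle CI answers to the behaviour of a consistent test on finite samples, so the proof is essentially a consistency-of-estimation argument glued to an already-proved population inequality.

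First I would argue that a consistent CI test behaves like a perfect oracle in the limit. By definition of consistency, each conditional independence query $U \ind V \mid \cS$ issued inside the skeleton phase is answered correctly with probability tending to one. Since $\X$ and $\Y$ have finitely many components, only finitely many triples $(U,V,\cS)$ are ever tested across the four skeleton runs, so a union bound shows that with probability tending to one \emph{all} queries are answered correctly simultaneously. Under assumption (A1) conditional independence coincides with d-separation, and the population graphs $\G_{\X}, \G_{\X|\Y}, \G_{\Y}, \G_{\Y|\X}$ of Theorem \ref{thm.edgedensity1} are by definition the skeletons produced by exactly these skeleton runs with oracle tests (for the conditional graphs, with $\Y$ respectively $\X$ appended to every conditioning set). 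Hence, with probability tending to one, the finite-sample skeleton phase returns precisely these population skeletons.

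Next I would observe that the number of edges, and therefore the edge density, is a deterministic function of the recovered skeleton, so matching the oracle skeleton forces each estimated density to equal its population value on the high-probability event above. Consequently $\widehat{\mathrm{edgeDens}(\G_{\X|\Y})}$, $\widehat{\mathrm{edgeDens}(\G_{\X})}$ and their $\Y$-counterparts converge in probability to their population analogues, giving $\widehat{d(\X|\Y)} \to d(\X|\Y)$ and $\widehat{d(\Y|\X)} \to d(\Y|\X)$, and thus $\mathrm{Crit} \to d(\X|\Y) - d(\Y|\X)$.

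Finally I would invoke Theorem \ref{thm.edgedensity1}: under Model 1 with at least one of (C1), (C2), the limit $c := d(\X|\Y) - d(\Y|\X)$ satisfies $c > 0$. For any sensitivity parameter $\alpha < c$ the event $\mathrm{Crit} > \alpha$ then has probability tending to one, so the algorithm returns '$\X$ is the cause of $\Y$', the correct direction. The one point deserving care — and the main subtlety rather than a deep obstacle — is the interaction of the discrete decision rule with the threshold $\alpha$: because $c$ is a fixed positive population quantity, the claim holds for every $\alpha$ below this threshold, while an overly conservative choice $\alpha \ge c$ would only render the algorithm agnostic ('Causal direction cannot be determined'), never incorrect. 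I would state the corollary for such sufficiently small $\alpha$; everything else is a direct translation of the oracle inequality of Theorem \ref{thm.edgedensity1} into sample consistency.
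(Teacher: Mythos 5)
Your proof is correct and follows essentially the same route as the paper: the paper likewise reduces the corollary to Theorem \ref{thm.edgedensity1}, noting that a consistent CI test makes 2G-VecCI.PC estimate $d(\X|\Y)$ and $d(\Y|\X)$ consistently, so the sign of $\mathrm{Crit}$ eventually identifies the direction. Your extra caveat about the sensitivity parameter (correct output requires $\alpha < d(\X|\Y)-d(\Y|\X)$, while a larger $\alpha$ yields agnosticism rather than an error) is a legitimate refinement that the paper's one-line argument leaves implicit, but it does not change the approach.
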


The above result is fully non-parametric, but in practice, under appropriate assumptions, one way to include $\Y$ as a conditioning set for independence tests within $\X$ for computing $d(\X|\Y)$, is to regress $\Y$ on $\X$ and run skeleton phase on the residuals of this regression, and vice-versa for $d(\Y|\X)$. In particular, if the relationship between groups is assumed linear, this is our method of choice. As an alternative to 2G-VecCI.PC, it is also possible to run a 'one sided' version that only computes $\widehat{d(\X|\Y)}$ (or $\widehat{d(\Y|\X)}$) and decides solely based on its sign. This version of the algorithm is computationally less costly but we have found it to be more frail to statistical errors than the full version in practice.  \\

\noindent Our second algorithm 2G-VecCI.Full differs from 2G-VecCI.PC in that skeleton phase of the PC algorithm is replaced by independence tests 
\begin{align*}
 X_i \ind X_j \ &| \ \X \backslash \{X_i,X_j \}, \\
  X_i \ind X_j \ &| \ \X \backslash \{X_i,X_j \},\Y, \quad \forall{i,j},
\end{align*} 
to compute edge densities for $d'(\X|\Y)$ since we are interested in the (conditional) dependence graph. Again, it is reasonable to regress $\X$ on $\Y$ in many practical settings and to perform independence tests on the residuals. We proceed analogously with $\X$ and $\Y$ exchanged to compute $d'(\Y|\X)$.
 
\noindent Another remark of practical importance is that 2G-VecCI.Full is also applicable in the context of Model 1, although it measures the edge densities of the \emph{moralized graph} of the causal DAGs, rather than the edge densities of the DAGs itself (see the appendix for the definition of the moralized graph). In this sense, 2G-VecCI.Full is more general than 2G-VecCI.PC.  However, if the causal DAG over a group differs dramatically from its moralized graph, 2G-VecCI.PC should be preferred. As an extreme example, consider the case where there is $X_i \in \X$ that is a common child of all other elements of $\X$ and similarly $Y_j \in \Y$ that is a common child of all other elements of $\Y$. In this situation, the moralized graph over each group would be fully connected (with or without conditioning) and 2G-VecCI.Full would not be suitable.

\RestyleAlgo{ruled}
\begin{algorithm}[!h]
\LinesNumbered
\caption{2G-VecCI.Full}\label{pseudocode2}
\KwData{two arrays containing samples of $\X$ and $\Y$, parameter $\alpha \in [0,1]$.}
\KwResult{variable with values '$\X$ is the cause of $\Y$', or '$\Y$ is the cause of $\X$', or 'Causal direction cannot be determined'.}
\For{$X_i \neq X_j \in \X$}{test $X_i \ind X_j \ | \ \X \backslash \{X_i,X_j \}$ \;
\lIf{dependent}{$\widehat{\mathrm{edgeCount}(\G_{\X}')}+=1$}
test $X_i \ind X_j \ | \ \X \backslash \{X_i,X_j \}, \Y$ \;
\lIf{dependent}{$\widehat{\mathrm{edgeCount}(\G_{\X|\Y}')}+=1$}}
Compute $\widehat{\mathrm{edgeDens}(\G_{\X}')} = \frac{\widehat{\mathrm{edgeCount}(\G_{\X}')}}{\mathrm{edgeMax}}$\; 
Compute $\widehat{\mathrm{edgeDens}(\G_{\X|\Y}')} = \frac{\widehat{\mathrm{edgeCount}(\G_{\X|\Y}')}}{\mathrm{edgeMax}}$\; 
Compute
$\widehat{d'(\X|\Y)} = \widehat{\mathrm{edgeDens}(\G_{\X|\Y}')} - \widehat{\mathrm{edgeDens}(\G_{\X}')}$\;
Repeat 1-9 with $\X,\Y$ exchanged to get $\widehat{d'(\Y|\X)}$\;
Compute $\mathrm{Crit} = \widehat{d'(\X|\Y)} - \widehat{d'(\Y|\X)}$\;
Repeat steps 7-10 of 2G-VecCI.PC.

% \lIf{ $|\mathrm{Crit}| < \alpha$}{return \\ 'Causal direction cannot be determined'}
% \lIf{ $\mathrm{Crit} > \alpha$}{return '$\X$ is the cause of $\Y$'}
% \lIf{ $\mathrm{Crit} < -\alpha$}{return '$\Y$ is the cause of $\X$'}
\end{algorithm}
\section{Experimental Results} \label{sec.experiments}

\subsection{Simulated Data} \label{subsec.simulated}

We depict several results for 2G-VecCI.PC and 2G-VecCI.Full for linear and nonlinear (quadratic) models in the main text. For linear models we generate the empirical distributions of $\X$ and $\eta_{\Y}$ by linear SCMs with randomly chosen coefficients and Gaussian noises with randomized variances in the range $[0.5,2.0]$. We then generate a random $n \times m$ interaction matrix $A$ and set $\Y = A \X + \eta_{\Y}$. Models vary along the following parameters:

\begin{itemize}
\item sample size (between $50$ and $500$);
\item group sizes $n$ and $m$ (between $3$ and $100$);
\item edge densities within $\X$ and $\eta_Y$ (between $1\%$ and $90\%$ of all possible edges);
\item density of the interaction matrix $A$ (between $1\%$ and $90\%$ of all possible entries non-zero);
\item effect size, i.e. size of the entries in $A$ (uniformly randomly drawn from different intervals).
\end{itemize} 

For each parameter choice, 100 random models are generated. In both algorithms, we test for conditional independencies using the partial correlation test at significance level $\tilde{\alpha} = 0.01$. We choose the sensitivity parameter to be $\alpha = 0.01$ if not specified differently. We also compare our algorithms to existing approaches, see below. Computations were done on BullSequana XH2000 with AMD 7763 CPUs. We observe the following:

\begin{itemize}
\item Generally speaking, 2G-VecCI.Full outperforms 2G-VecCI.PC on our simulated data even though the data is generated by a causal model.
\item Performance increases with increasing effect size and increasing interaction density and decreases with increasing edge densities within variable groups.
\item More precisely, both algorithms perform best when dependencies within both variable groups are sparse to medium sparse ($<10\%$ of all possible edges). Nevertheless, for high sample sizes (e.g. $500$), the correct causal direction is still inferred reliably for large groups ($100$ variables per group) even if variable groups are quite dense ($30\%$ of all possible edges).
\item For the algorithms to perform well, the interaction matrix should not be too sparse (i.e. $<10\%$ of non-zero entries) and effect sizes should not be too small $<0.1$.This type of 'weak mechanism problem' is typical for constraint-based causal inference algorithms in general as one operates close to the non-faithful regime.
\end{itemize}
\subsubsection{Complexity of 2G-VecCI.PC and 2G-VecCI.Full}

%Since conditional independence tests effectively decrease the sample size with every variable that is added to the conditioning set, 2G-VecCI.Full, which employs few tests with large conditioning sets, is not suited for situations where group sizes approach the number of available samples. 2G-VecCI.PC, which runs many tests with smaller conditioning sets, can mitigate this issue to an extent by imposing a maximal separating set (sep-set) size $\mathrm{sepsetmax}$ after which the search for a sep-set in the skeleton phase is stopped. However, this can only work for variable groups that are assumed sparse. If microvariables have more than $\mathrm{sepsetmax}$ neighbors, false edges will be found even by an oracle independence test and performance will suffer. If $\mathrm{sepsetmax}$ is set too high on the other hand, for dense groups, 2G-VecCI.PC will be very slow as the number of possible sep-sets increases exponentially with group size. In addition, if there are many micro-interactions between cause and effect group (i.e. non-zero entries of the interaction matrix $A$), the graph $\G_{\X | \Y}$ will be dense, increasing the computational cost of 2G-VecCI.PC.
As in other PC-based methods, the number of CI tests run by 2G-VecCI.PC is data dependent and may increase exponentially in the worst case, i.e. for high group and interaction densities where separating sets need to be large. 2G-VecCI.Full on the other hand runs $2(n^2+m^2)$ CI tests where $n,m$ are the group sizes, independent of the specifics of the data (but with large conditioning sets). Therefore, if groups and interactions are assumed very sparse, 2G-VecCI.PC may be less costly than 2G-VecCI.Full while 2G-VecCI.Full should be preferred over 2G-VecCI.PC when groups are assumed to be reasonably dense and in practice, we have typically found 2G-VecCI.Full to perform significantly faster.  For an in-depth discussion on computational cost of the PC-algorithm, we refer the reader to \citet{KaBue07} and \citet{LeFastPC}.

To test our methods for nonlinear interactions, ground truth data is generated as in the linear case except that we use the model $\Y = A \X^2 + \eta_{\Y}$, where $\X^2$ is shorthand for the vector of squared entries of $\X$. 

Here, our methods are affected more strongly by increasing group sizes as nonlinear CI-tests tend to be much slower than tests for partial correlation.
Nevertheless, 2G-VecCI.Full run with the Gaussian Process distance correlation independence test still finds the correct causal direction significantly better than a random choice would for groups of size 15 (with 100 samples) and 25 (with 200 samples), see Figure 4. At present, we did not implement non-linear interactions for 2G-VecCI.PC. For large groups, performance could potentially be sped up by reducing dimensions locally. For instance, if the data is structured spatially, small subregions could be averaged to scalar variables to obtain a coarser variable group. The approach of \citet{ChaEbPer17} might be helpful here and combining it with our work might be an avenue for future research.

\subsection{Comparison to other methods} \label{sec.limitations}

Two established methods for inferring causal relations between variable groups are multivariate LiNGaM \citep{ShimizuLinGaM} \citep{EntHoy12} and the Trace Method \citep{Janzing09} \citep{ZscheiJanZhang12}. In contrast to our methods, both of these techniques assume interactions to be linear and LiNGaM additionally requires non-Gaussian noise to be applicable. In simulations with linearly interacting groups and Gaussian noise, the trace method and both versions of 2G-VecCI perform comparably well, see Figures \ref{fig.groupsize30}, \ref{fig.different_group_sizes_low_group_density} and \ref{fig.groupsize100}, although the trace method is significantly faster. We also analysed the performance of our method against a baseline PC algorithm (Vanilla-PC) by treating each component in the vector-valued variables as a separate node and counting the arrow directions from (nodes belonging to) one group to the other. In general our methods outperform Vanilla PC except when groups are small and the interaction matrix is sparse, see Figures \ref{fig.groupsize30} and \ref{fig.different_group_sizes_low_group_density} as well as Section \ref{subsec.compPC} and Figures \ref{fig.different_group_sizes_low_interaction_density},  \ref{fig.different_sample_sizes}  in the supplement.

\subsection{A real-world example} \label{subsec.realworld}
In order to test our algorithms in a typical causal discovery setting in Earth sciences, we consider surface temperatures over the ENSO 3.4 region and over British Columbia (denoted by BCT) from 1948-2021. We consider this example because a causal effect of temperatures in the tropical pacific on those in North America is established in climate science. Additionally, \citet{Runge17Science} used this example to test the PCMCI algorithm and found a causal link $\text{Nino}_{t-2} \to \text{BCT}_{t}$, i.e. at a time-lag of two months. 

The data is first de-seasonalized and any long-term trend is removed from the raw time series with a Gaussian kernel smoothing mean with a bandwidth of $\sigma$ = 120 months as in \citet{Runge17Science}. Furthermore, in order to mitigate auto-correlation in time, we consider means of ENSO temperature anomalies from October to December and means of BCT anomalies from January to March, leading to 73 samples each for the two groups $\X$ and $\Y$. To get more robust results, we consider different coarse grainings of the two regions, for instance every second grid-box for ENSO and every third grid-box for BCT anomalies. This has the additional effect of reducing group sizes in comparison to the sample size. We moreover reject coarse grainings that correspond to a difference of more than 10 grid-boxes between the two groups, in order to avoid a bias due to region sizes. Algorithm 2G-VecCI.Full with the partial correlation CI-test then computes $\mathrm{Crit} = \widehat{d'(\X|\Y)} - \widehat{d'(\Y|\X)}$, see \eqref{eq.d_hat_1} and \eqref{eq.d_hat_2}. We find the mean and the standard deviation of the $\mathrm{Crit}$ values to be $\mu = 0.031 \text{ and } \sigma = 0.063$, respectively, indicating a causal effect of ENSO on BCT. If the sensitivity parameter $\alpha$ is chosen  to be $0.01$, then the fraction of correct inferences is $0.59$ and the fraction of wrong inferences is $0.27$. Thus 2G-VecCI.Full deduces the correct casual direction Nino $\to$ BCT with high probability. Algorithm 2G-VecCI.PC with partial correlation and $\alpha = 0.01$ on the other hand yields $\mu = 0.001 \text{ and } \sigma = 0.019$, respectively and is thus indecisive. We attribute the low detection power of 2G-VecCI.PC to the reduced effect size arising from unobserved confounders and insufficiently mitigated auto-correlation in this simplified study.
NCEP-NCAR Reanalysis 1 data was provided by NOAA PSL, Boulder, Colorado, USA, from their website at https://psl.noaa.gov, see \citet{TheNCEPNCAR40YearReanalysisProject}.

\section{Discussion and Outlook}
We have introduced two new algorithms to infer causal direction between two potentially high-dimensional groups of variables and provided a theoretical analysis of groupwise causal inference in the DAG-based causality framework. 

The \textbf{main strengths} of our work are that it contains a novel identifiability result for the unidirectional causal vector model and practical implementations using density estimates of the vector-valued variables. It is comparable to the trace method in the high sample regime when interactions are linear and better when the group sizes are large and interactions are sparse. Moreover our methods are also able to deal with non-linear interactions.  
Currently, the \textbf{main weaknesses} are that we work with an i.i.d.~assumption on the data samples and that unobserved confounding variables are not addressed. Furthermore, our algorithms have slower runtime than the trace method or dimension reduction techniques.

In \textbf{future work}, we plan to extend this work to the setting of multiple variable groups similar to \citet{ParKas17} as well as to use partial dimension reduction techniques. We also plan to relax the i.i.d. assumption to better deal with autocorrelations following \citet{Runge17Science}.

%\bibliographystyle{aaai23}

%\newpage
\begin{figure*}[!ht] 
    \centering
    \begin{subfigure}[t]{0.22\textwidth}
        \centering
        \includegraphics[scale=0.3]{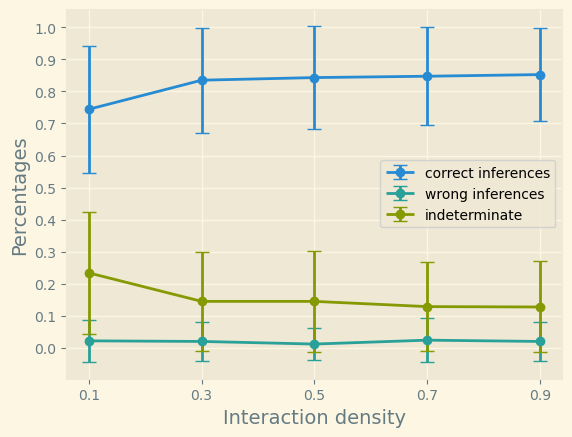}
        \caption{2G-VecCI.PC}
    \end{subfigure}%
    ~ 
    \begin{subfigure}[t]{0.22\textwidth}
        \centering
        \includegraphics[scale=0.3]{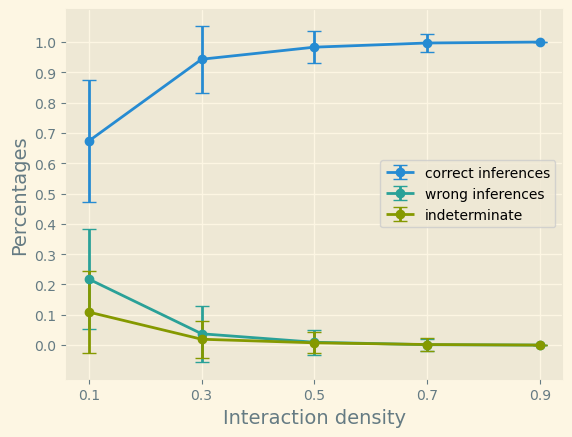}
        \caption{2G-VecCI.Full}
    \end{subfigure}
~
    \begin{subfigure}[t]{0.22\textwidth}
        \centering
        \includegraphics[scale=0.3]{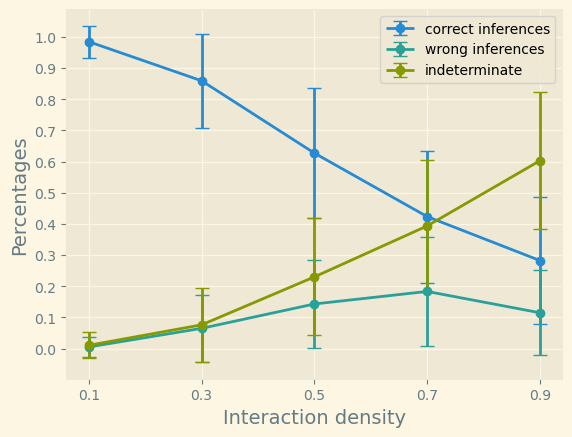}
        \caption{Vanilla-PC}
    \end{subfigure}
    ~ 
    \begin{subfigure}[t]{0.22\textwidth}
        \centering
        \includegraphics[scale=0.3]{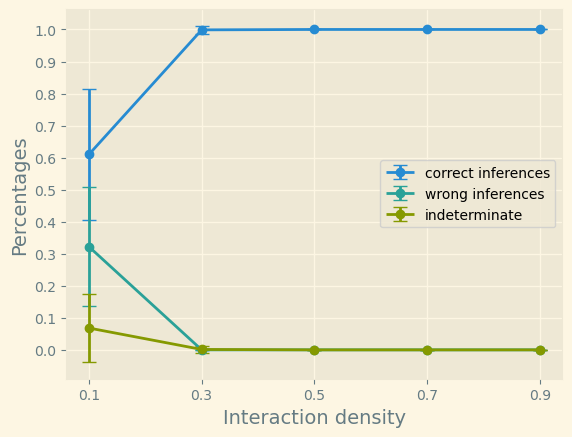}
        \caption{Trace Method}
    \end{subfigure}
    \caption{Performance of 2G-VecCI.PC, 2G-VecCI.Full, Vanilla-PC and the trace method for groups of size 30 with 100 available samples and linear interactions. Performance is shown along increasing density of the interaction matrix $A$, and percentages are averaged across different parameters for internal group densities ($1\%, 5\%, 10\%, 30\%$ of all possible edges present). Non-zero entries of $A$ are drawn uniformly randomly from $[-0.7,0.7]$ and 100 random models are run per parameter combination. All approaches except Vanilla PC recover the correct causal direction well. Vanilla-PC is challenged by dense interaction matrices as this increases the overall density of the causal graph over all microvariables. We set the sensitivity parameter of Vanilla-PC to $10^{-4}$ to ensure that the lack of performance is not due to an overly conservative choice.} \label{fig.groupsize30}
\end{figure*}

\begin{figure*}[!ht] 
    \centering
    \begin{subfigure}[t]{0.22\textwidth}
        \centering
        \includegraphics[scale=0.3]{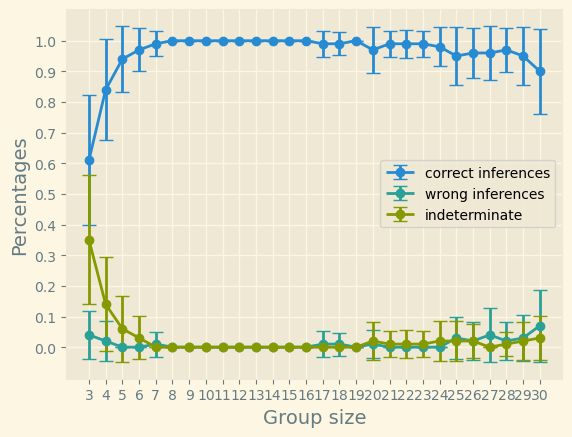}
        \caption{2G-VecCI.PC}
    \end{subfigure}%
    ~ 
    \begin{subfigure}[t]{0.22\textwidth}
        \centering
        \includegraphics[scale=0.3]{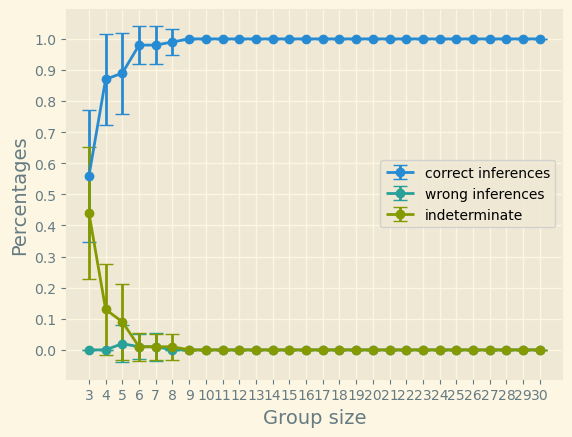}
        \caption{2G-VecCI.Full}
    \end{subfigure}
~
    \begin{subfigure}[t]{0.22\textwidth}
        \centering
        \includegraphics[scale=0.3]{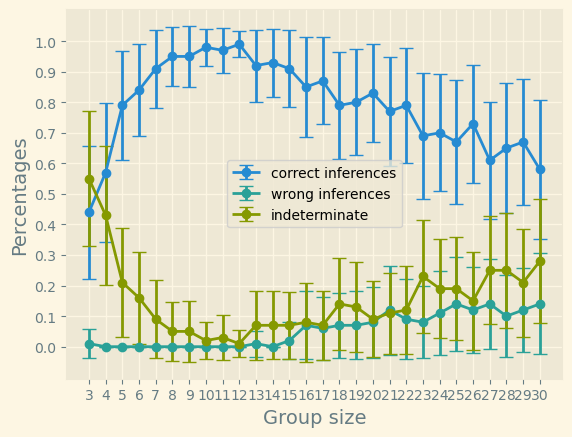}
        \caption{Vanilla-PC}
    \end{subfigure}
    \begin{subfigure}[t]{0.22\textwidth}
        \centering
        \includegraphics[scale=0.3]{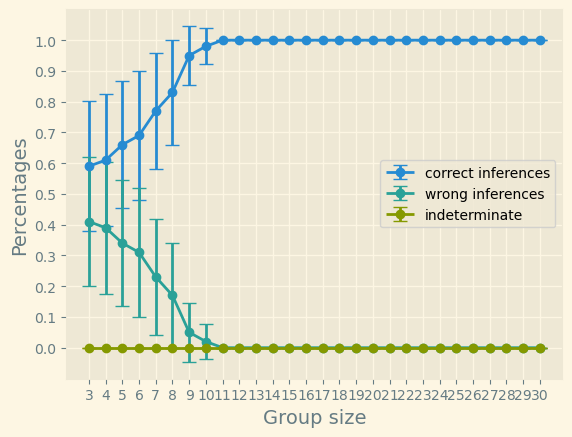}
        \caption{Trace Method}
    \end{subfigure}
    \caption{Performance of 2G-VecCI.PC, 2G-VecCI.Full, Vanilla-PC and the Trace Method for groups different sizes and low densities ($10 \%$). The density of the interaction matrix $A$ are set to $50\%$. Non-zero entries of $A$ are drawn uniformly randomly from $[-0.7,0.7]$ and 100 random models are run per parameter combination with 100 samples each. We set the sensitivity parameter of both 2G-VecCI.PC and Vanilla-PC to $10^{-4}$. Vanilla PC performs well for small groups but decreases in performance as group sizes grow.} \label{fig.different_group_sizes_low_group_density}
\end{figure*}

\begin{figure*}[!ht] 
    \centering
    \begin{subfigure}[t]{0.22\textwidth} 
        \centering
        \includegraphics[scale=0.3]{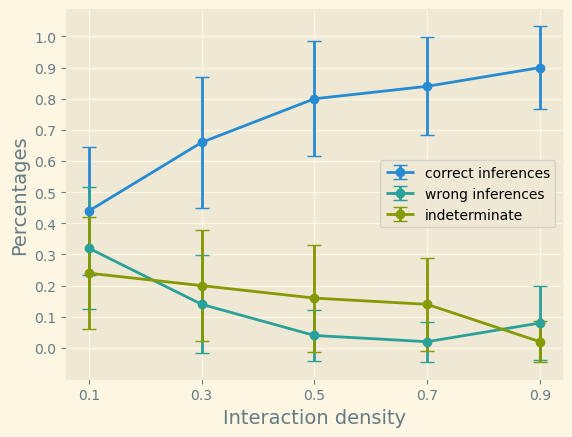}
        \caption{Nonlinear, group size 15}
    \end{subfigure}
    ~ 
    \begin{subfigure}[t]{0.22\textwidth}
        \centering
        \includegraphics[scale=0.3]{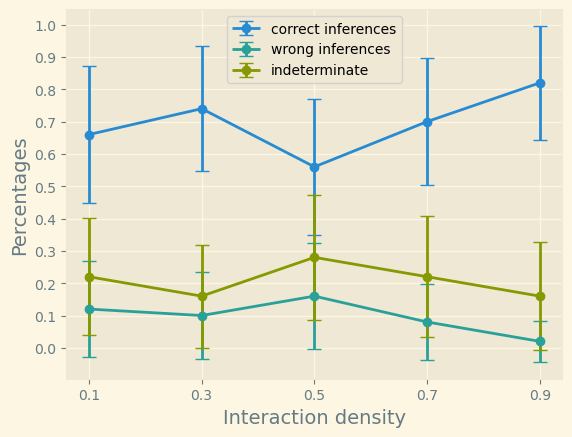}
        \caption{Nonlinear, group size 25}
    \end{subfigure}
    \caption{Performance of 2G-VecCI.Full for quadratic interactions of size 15 (left) and 25 (right) with 200 samples. Performance is shown along increasing density of the interaction matrix $A$. Non-zero entries of $A$ are drawn uniformly randomly from $[-0.7,0.7]$, groups are assumed medium dense on the left ($\approx 5$ connections per variable) and sparse on the right ($\approx 3$ connections per variable). 50 random models are run per choice of parameters. In both cases 2G-VecCI.Full finds the correct causal direction better than chance would except when $A$ is extremely sparse.} \label{fig.nonlinear}
\end{figure*}

\clearpage
\clearpage

\section{Acknowledgements}
This research has been supported by grant no. 948112 \emph{Causal Earth} of the European Research Council (ERC). This work used resources of the Deutsches Klimarechenzentrum (DKRZ) granted by its Scientific Steering Committee (WLA) under project ID bd1083. We thank the reviewers for their insightful remarks and suggestions.

\bibliography{References.bib}
 \clearpage
\clearpage

\section{Vector causal inference between two groups of variables: Technical Appendix}

\subsection{Graphs and separations}

We will first recall the notion of $d$-separation on a \emph{directed acyclic graph (DAG)} $\G = (V,E)$. If $\G$ contains an edge $A \rightarrow B$, we call $A$ a \emph{parent} of $B$ and $B$ a \emph{child} of $A$. Similarly if there is a directed path $A \rightarrow \dots \rightarrow B$, we call $B$ a \emph{descendant} of $A$. An undirected path between nodes $A$ and $B$ is a sequence of edges $A - \dots - \dots - B$ where the directionality of the edges is ignored. A node $C \in V \backslash \{ A,B \}$ on such a path is called a \emph{collider} if both adjacent edges point towards $C$, i.e. $\rightarrow C \leftarrow$, otherwise it is called a \emph{non-collider}. A path $p$ is said to be blocked by a (possibly empty) set of nodes $\cS$ if $\cS$ contains a non-collider $S \in \cS$ on $p$ or if there exists a collider on $p$ such that none of its descendants is contained in $\cS$. Finally, we say that $A$ and $B$ are $d$-separated by $\cS$ if every path between them is blocked by $\cS$. 

If nodes on a DAG $\G$ correspond to scalar random variables $X_1,\dots X_n,$ with joint distribution $P_{X_1,\dots,X_n}$, we say that $P_{X_1,\dots,X_n}$ has the (causal) \emph{Markov property} on $\G$ if d-separation implies conditional independence, i.e., if $\cS$ d-separates $X_i, X_j$, then $X_i \ind X_j | \cS$. If on the other hand conditional independence implies d-separation, we say that the distribution $P_{X_1,\dots,X_n}$ is \emph{faithful} to $\G$. A causal model is said to be \emph{causally sufficient} if there are no hidden confounders are present. 
\\ \\
\noindent Recall also that the \emph{moralized graph} of a DAG is the undirected graph over the same set of nodes in which any node is connected to its DAG-parents, -children and any parent of its children.

%\noindent In undirected graphs these notions are somewhat more simple and a path $p$ between nodes $A,B$ is called blocked by a set of nodes $\cS$ if $\cS$ contains a node on $p$ that is not an endpoint. If any path between $A$ and $B$ is blocked by $\cS$, we say that $\cS$ \emph{u-separates} them. 

%If nodes on an undirected graph $\mathcal{H}$ correspond again to scalar random variables $X_1,\dots X_n,$ with joint distribution $P_{X_1,\dots,X_n}$, we say that $P_{X_1,\dots,X_n}$ has the \emph{u-Markov property} on $\mathcal{H}$ if u-separation implies conditional independence, i.e., if $\cS$ u-separates $X_i, X_j$, then $X_i \ind X_j | \cS$. If on the other hand conditional independence implies u-separation, we say that the distribution $P_{X_1,\dots,X_n}$ is \emph{u-faithful} on $\mathcal{H}$. Adding u- to these properties is non-standard terminology, but we will do so to avoid confusion since we deal with both settings of directed and undirected graphs.

\subsection{Proofs}

We will start with the proof of Lemma \ref{thm.principle1}. Corollary \ref{thm.ident1} then follows directly from Lemma \ref{thm.principle1}.

\begin{proof}[Proof of Lemma \ref{thm.principle1}]
We prove the theorem by contradiction. Suppose that there is a subset $\cS \subset \Y$ and nodes $Y_k, Y_{\ell} \in \Y$ such that
\begin{align*}
    Y_k \ind Y_{\ell} \ | \ \cS \qquad \text{and} \qquad Y_k \centernot \ind Y_{\ell} \ | \ \cS, \X.
\end{align*}
By  faithfulness every path between $Y_k$ and $Y_{\ell}$ is blocked by $\cS$ and by the Markov property, there must be a path between $Y$ and $\Tilde{Y}$ that is unblocked when conditioning on $\X$ and that therefore has to pass through $\X$. Any such path has to be of the form 
\begin{align*}
    Y - \dots Y' \leftarrow X' - \dots - X'' \rightarrow Y'' - \dots - \Tilde{Y}, 
\end{align*}
as there are only causal arrows from $\X$ to $\Y$ by (A2). Since $X', X''$ cannot be colliders, this path has to be blocked by $\X$, which is a contradiction.
\end{proof}

\noindent Next, we turn to the graphical characterization of Condition (C1) of Lemma \ref{thm.ident1}.

%We say that $Y_k$ is an entry point of the effect region $\Y$ if there is an edge $X_i \to Y_k$ for some $X_i \in \X$. 

\begin{lemma}[Graphical characterization of (C1)] \label{lem.graphicalcharacterization}
Assume the assumptions of Model 1 to be satisfied. Then (C1) holds iff there is a subset $\cS \subset \X$, scalar variables $X_i,X_j \in \X$ and a path $p$ between $X_i$ and $X_j$ such that
\begin{itemize}
\item[(i)] $\cS$ d-separates $X_i,X_j$ and
\item[(ii)] $p$ passes through $\Y$,
\item[(iii)] both neighbors of any node $Y \in \Y$ on $p$ lie in $\X$, i.e. around $\Y$-variables, $p$ is of the form $X_k \rightarrow Y \leftarrow X_l$,
\item[(iv)] any subpath of $p$ contained in $\X$ is unblocked by $\cS$.
\end{itemize}
\end{lemma}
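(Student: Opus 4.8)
By faithfulness and the Markov property (A1), conditional independence and $d$-separation coincide, so (C1) is equivalent to the purely graphical statement that there exist $\cS \subset \X$ and $X_i, X_j \in \X$ such that $\cS$ $d$-separates $X_i, X_j$ while $\cS \cup \Y$ does \emph{not}. The latter means there is a path $p$ from $X_i$ to $X_j$ that is blocked by $\cS$ but active ($d$-connecting) given $\cS \cup \Y$. The plan is therefore to prove both implications by analysing the structure of paths that are active under a conditioning set containing all of $\Y$; conditions (i)--(iv) are exactly the features such a path must have.

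\textbf{Key structural observation.} I would first record a fact used in both directions: on \emph{any} path $q$ that is active given a set $T$ with $\Y \subseteq T$, every node $Y \in \Y$ lying on $q$ must be a collider, since a non-collider belonging to the conditioning set would block $q$. Moreover two adjacent colliders are impossible (each would need its shared edge to point inward), so both neighbours of such a $Y$ on $q$ lie in $\X$; locally $q$ has the form $X_k \to Y \leftarrow X_l$. This is precisely (iii), and it shows that $q$ decomposes into maximal $\X$-subpaths joined at cross-regional v-structures.

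\textbf{Reverse direction.} Suppose $\cS, X_i, X_j, p$ satisfy (i)--(iv). By (i) and the Markov property, $X_i \ind X_j \mid \cS$. To get the second half of (C1) I would check that $p$ is active given $\cS \cup \Y$: by (iii) its $\Y$-nodes are colliders and are activated because they lie in $\Y$; all remaining nodes lie in $\X$, the interior non-colliders of the $\X$-subpaths avoid $\cS$ by (iv), and the junction nodes $X_k$ (which point into their $\Y$-collider and are hence non-colliders of $p$) must likewise avoid $\cS$, which is part of what ``unblocked by $\cS$'' encodes; finally (iv) activates the $\X$-subpath colliders via $\cS$. Thus $p$ $d$-connects $X_i, X_j$ given $\cS \cup \Y$, and faithfulness gives $X_i \centernot\ind X_j \mid \cS, \Y$, completing (C1).

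\textbf{Forward direction and main obstacle.} Conversely, assume (C1). Faithfulness yields a path $p$ active given $\cS \cup \Y$ and blocked by $\cS$; properties (ii) and (iii) follow from the structural observation. Since a non-collider blocks identically under $\cS$ and $\cS \cup \Y$, the blocking by $\cS$ must stem from a collider $V$ on $p$ that $\cS$ fails to activate but $\cS \cup \Y$ does, so $V \in \Y$ or $V$ has a descendant in $\Y$. The delicate point---and the step I expect to be the main obstacle---is establishing (iv): a generic active path may contain $\X$-colliders that are activated only through a descendant in $\Y$, so that the corresponding $\X$-subpath is \emph{blocked} by $\cS$ alone. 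To produce a witness satisfying (iv) one must re-select the endpoints and the separating set---passing to a suitably minimal witness and rerouting so that the relevant collider activation is carried by an actual cross-regional v-structure $X \to Y \leftarrow X'$ on the path, leaving the remaining $\X$-subpaths $\cS$-active. Proving that this reduction is always available (equivalently, that the activating collider may be taken to lie inside $\Y$) is the crux of the argument.
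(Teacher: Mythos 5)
Your reverse direction ((i)--(iv) $\Rightarrow$ (C1)) is correct and coincides with the paper's argument: Markov gives $X_i \ind X_j \mid \cS$, the $\Y$-nodes on $p$ are colliders activated by conditioning on $\Y$, and faithfulness turns the resulting d-connection into $X_i \centernot\ind X_j \mid \cS, \Y$. Your extra care about the junction nodes is warranted, not pedantic: under the usual convention that blocking constrains only \emph{interior} nodes of a path, (iv) alone does not prevent a parent $X_k$ of a $\Y$-collider on $p$ from lying in $\cS$, in which case $p$ is blocked given $\cS,\Y$ after all (with edges $X_1 \rightarrow X_3 \rightarrow X_2$, $X_3 \rightarrow Y_1 \leftarrow X_2$ and $\cS = \{X_3\}$, the path $X_1 \rightarrow X_3 \rightarrow Y_1 \leftarrow X_2$ satisfies (i)--(iv) in the interior-only reading, yet (C1) fails in this graph). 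So this direction genuinely needs the reading of (iv) that you adopt.

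The obstacle you isolate in the forward direction is not only genuine, it is fatal: the minimal-witness/rerouting repair you hope for does not exist, because (C1) $\Rightarrow$ (i)--(iv) is \emph{false} under the paper's own definition of blocking, in which a collider is activated by any descendant in the conditioning set. Take $\X = \{X_1, X_2, X_3\}$, $\Y = \{Y_1\}$ with edges $X_1 \rightarrow X_3$, $X_2 \rightarrow X_3$, $X_3 \rightarrow Y_1$, and $\cS = \emptyset$. By the Markov property $X_1 \ind X_2$, while given $\Y$ the collider $X_3$ is activated through its descendant $Y_1$, so faithfulness gives $X_1 \centernot\ind X_2 \mid \Y$; hence (C1) holds. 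But $Y_1$ has only one adjacent edge, so \emph{no} path between $X_1$ and $X_2$ passes through $\Y$ (and every other pair in $\X$ is adjacent, so (i) can hold for no other pair); condition (ii) is unsatisfiable. This is precisely your ``collider activated only through a descendant in $\Y$'' scenario, and here it cannot be traded for a cross-regional v-structure because none exists. Notably, the paper's own proof takes exactly the leap you refused to take: it asserts that a path unblocked given $\cS, \Y$ ``hence must pass through $\Y$'' and that its $\X$-subpaths ``must be unblocked by $\cS$'' --- both claims fail in this example. A correct characterization must weaken (ii) and (iv), e.g.\ to: there is a path $p$, blocked by $\cS$, all of whose non-colliders avoid $\cS$, and all of whose colliders are in $\cS$, have a descendant in $\cS$, lie in $\Y$, or have a descendant in $\Y$; your structural observation (iii) then still describes whatever portion of $p$ actually lies in $\Y$. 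In short: your easy direction stands, your diagnosis of the crux is exactly right, and the step you could not supply cannot be supplied --- it is the statement, not your argument, that has to change.
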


%\begin{corollary} \label{cor.ident1}
%Assume that the assumptions of Model 1 are satisfied. If there are nonadjacent nodes $X_i, X_j \in \X$ and an entry point of $Y_k \in \Y$ that is a common descendant of $X_i, X_j$, the causal direction $\X \rightarrow \Y$ can be inferred from the observational distribution $P_{\X,\Y}$. In particular this holds if there is a cross-regional v-structure $X_i\rightarrow Y_k \leftarrow X_j$.
%\end{corollary}

\begin{figure*}[!ht]
    \centering
\begin{subfigure}[t]{0.2\textwidth}
        \centering
    \includegraphics[scale=0.25]{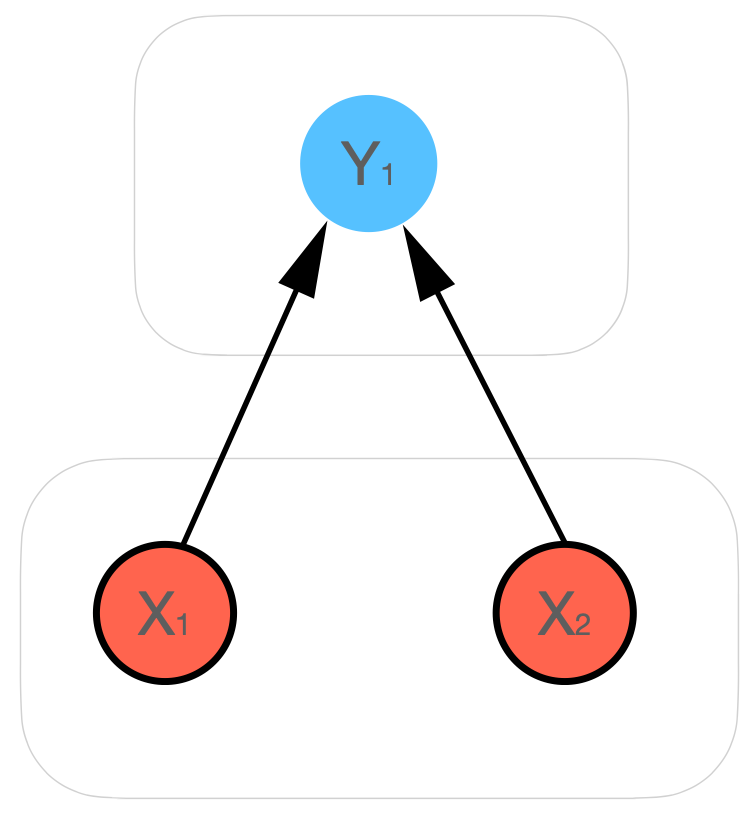}
    \caption{(C1) satisfied}
    \label{fig:lemma3_1a}  
\end{subfigure}%
~
\begin{subfigure}[t]{0.3\textwidth}
        \centering
    \includegraphics[scale=0.25]{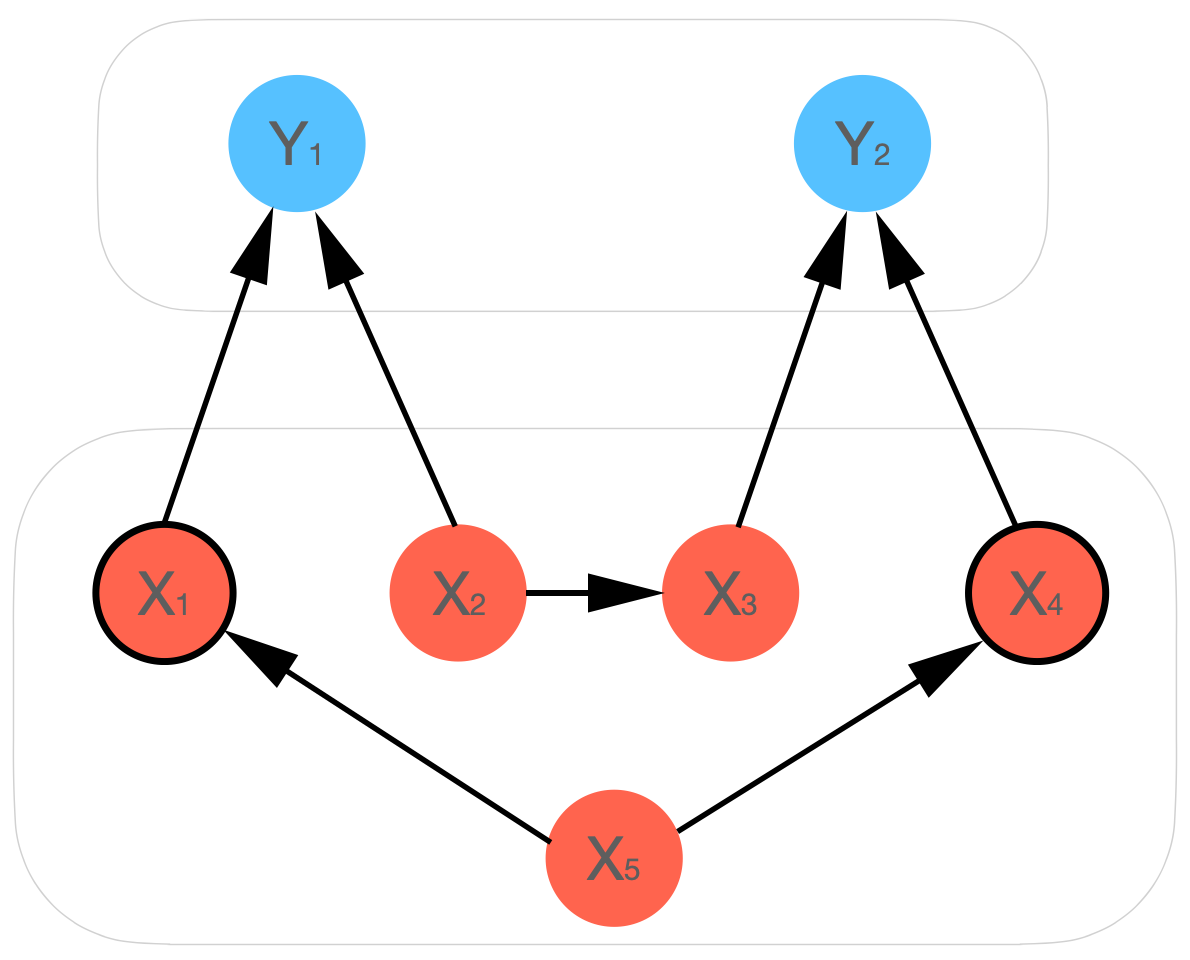}
    \caption{(C1) satisfied}
    \label{fig:lemma3_1b}
\end{subfigure}%
~
\begin{subfigure}[t]{0.2\textwidth}
        \centering
    \includegraphics[scale=0.25]{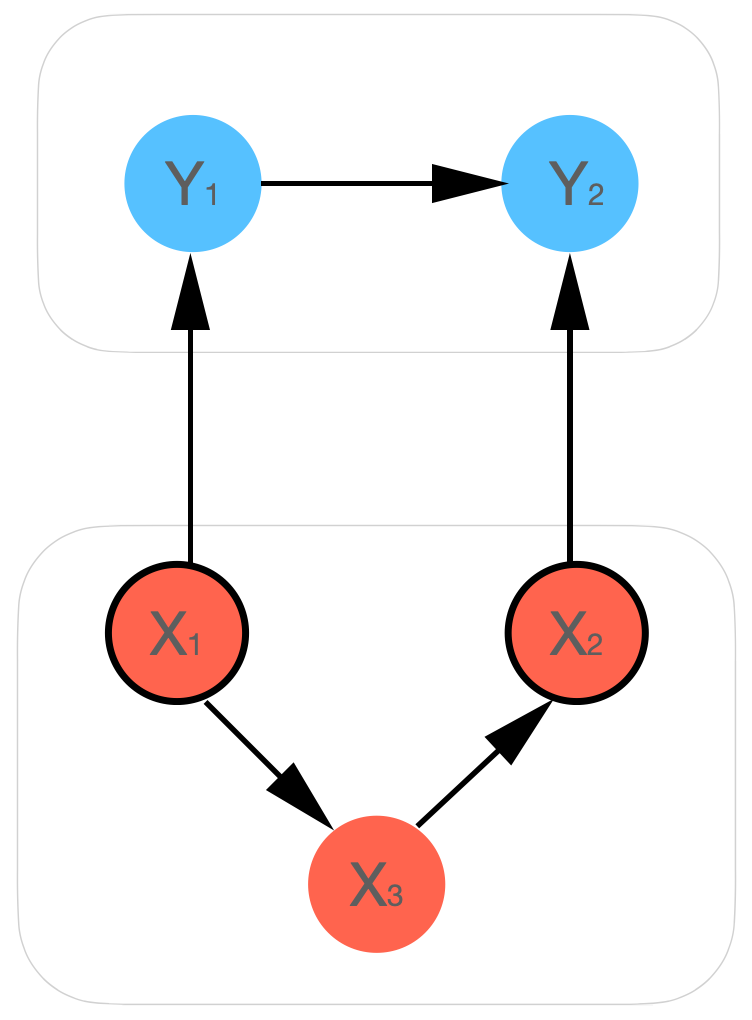}
    \caption{(C1) not satisfied }
    \label{fig:lemma3_1c}
\end{subfigure}%
~
\begin{subfigure}[t]{0.2\textwidth}
        \centering
    \includegraphics[scale=0.25]{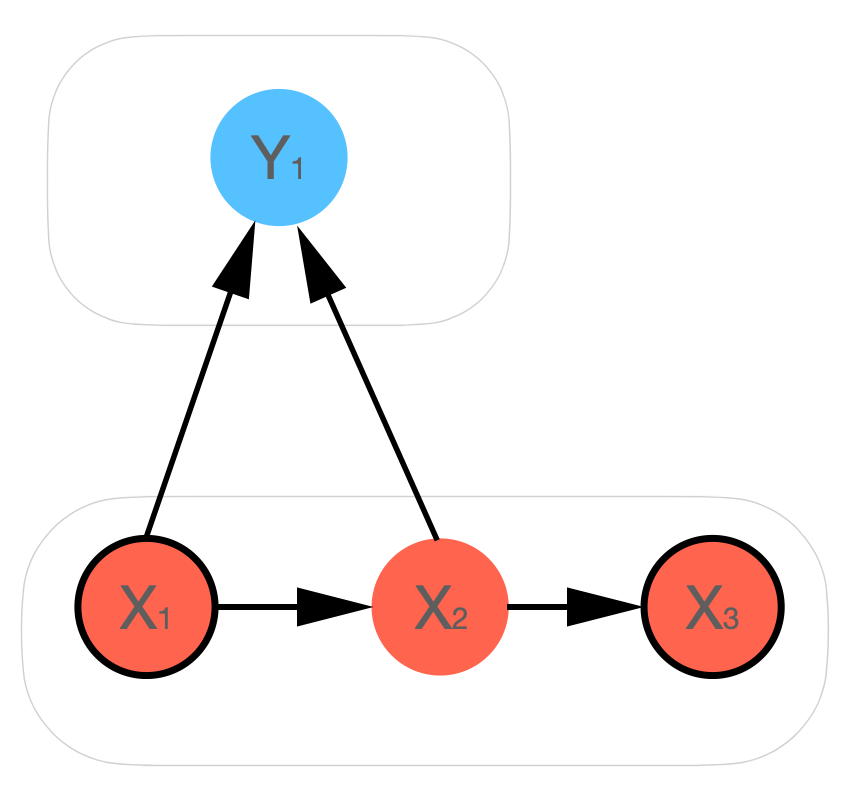}
    \caption{(C1) not satisfied}
    \label{fig:lemma3_1d}
\end{subfigure}
\caption{Examples of the unidirectional causal vector model $\X \to \Y$ for vector-valued variables $\X$ and $\Y$ for illustrating Lemma \ref{lem.graphicalcharacterization}. Cases (a), (b) satisfy (C1) for $X_i$ and $X_j$ denoted by nodes with a bold border. In particular, in (a) $X_1 \ind X_2 | \cS $, where $\cS = \emptyset$ and $X_1 \centernot \ind X_2 | \cS, \Y $ and, in (b) $X_1 \ind X_4 | \cS $, where $\cS = X_5$ and $X_1 \centernot \ind X_4 | \cS, \Y $. Cases (c), (d) do not satisfy (C1) for $X_i$ and $X_j$ denoted by nodes with a bold border and hence Corollary \ref{thm.ident1} does not hold. In particular, in (c) $X_1 \ind X_2 | X_3$ but $X_1 \ind X_2| X_3,\Y$ since point (iii) of Lemma \ref{lem.graphicalcharacterization} doesn't hold. Similarly in (d) point (iv) of Lemma \ref{lem.graphicalcharacterization} doesn't hold and $X_1 \ind X_3| X_2,\Y$.} \label{fig.lemma3_graph}
\end{figure*}

In Figure \ref{fig.lemma3_graph}, the conditions of Lemma \ref{lem.graphicalcharacterization} are depicted with the help of illustrative examples.
\begin{proof}[Proof of Lemma \ref{lem.graphicalcharacterization}]
Suppose that there is a subset $\cS$ of $\X$ that d-separates $X_i,X_j$ and a path $p$ as in the lemma. By the Markov property, we have $X_i \ind X_j | \cS$. Moreover, any $Y \in \Y$ that lies on $p$ is a collider so that the path is unblocked by the set of all such colliders and hence by $\Y$. Faithfulness thus implies that $X_i \centernot \ind X_j | \cS, \Y$. 
\\
Conversely if (C1) holds, by Faithfulness there must be $\cS\subset \X$ that d-separates $X_i,X_j$ (proving (i)) and a path $q$ between $X_i,X_j$ that is unblocked by $\cS,\Y$ and that hence must pass through $\Y$ (proving (ii)). Since $q$ is unblocked, any subpath within $\X$ must be unblocked by $\cS$ which proves (iv). If there was $Y$ on $q$ that had a neighbor $\tilde{Y} \in \Y$ on $q$, we claim that $q$ could not be unblocked by $\Y$. Indeed, since $Y$ and $\tilde{Y}$ cannot both be colliders, conditioning on $\Y$ would block $q$. Thus, (iii) holds as well.
\end{proof}

\noindent We now prove Lemma \ref{thm.principle2} from which Corollary \ref{thm.ident2} and thus Theorem \ref{thm.main} follows immediately.

\begin{proof}[Proof of Lemma \ref{thm.principle2}]
Because of (A2) any path between $X_i, X_j \in \X$ that passes through $\Y$ must contain a collider in $\Y$ and must therefore be blocked. If there existed $\cS \subset \X$ as in the theorem, then by the Markov property there must be a path between $X_i$ and $X_j$ that is unblocked by $\cS$. Hence it cannot pass through $\Y$. Therefore this path is still unblocked by $\cS \cup \Y$. By Faithfulness, it follows that $X_i \centernot \ind X_j | \cS, \Y$. This contradicts our assumption that $X_i \ind X_j | \cS, \Y$.
\end{proof}

\noindent Graphically, condition (C2) can be characterized as follows.

\begin{lemma}[Graphical characterization of (C2)] \label{lem.graphicalcharacterization2}
Assume the assumptions of Model 1 to be satisfied. Then (C2) holds iff there is a subset $\cS \subset \Y$, scalar variables $Y_k,Y_l \in \Y$ and a path $p$ between $Y_k$ and $Y_l$ such that
\begin{itemize}
    \item[(i)] $\cS$ d-separates $Y_k,Y_l$ in the restriction of the DAG $\G$ to $\Y$, i.e. in the subgraph that contains only nodes of  $\Y$ and edges between them.
    \item[(ii)] $p$ is unblocked by $\cS$ in $\G$ and passes through $\X$.
\end{itemize}
\end{lemma}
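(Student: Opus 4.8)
The plan is to prove the equivalence in both directions, translating between conditional (in)dependence and d-separation via the Markov property and Faithfulness from (A1), and then tracking how enlarging the conditioning set by the whole group $\X$ changes the blocking status of paths. The decisive structural fact furnished by (A2) is that there are no arrows $\Y \to \X$, so every directed path leaving a $\Y$-node stays inside $\Y$; in particular every descendant of a $\Y$-node lies in $\Y$. I would first record two preliminary observations. (O1) For a path lying entirely inside $\Y$, its blocking status given $\cS \subset \Y$ is the same whether computed in $\G$ or in the restricted subgraph $\G|_{\Y}$, since the colliders are the same nodes and their descendant sets agree by (A2). (O2) Any path $q$ between two $\Y$-nodes that enters $\X$ is blocked by $\cS \cup \X$: reading $q$ from $Y_k$, let $X^\ast$ be the first node of $\X$ encountered; the edge joining $X^\ast$ to its $\Y$-neighbour on the $Y_k$-side is a cross-group edge, hence by (A2) points out of $X^\ast$, so $X^\ast$ is a non-collider on $q$ lying in $\cS \cup \X$, which blocks $q$.

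For the forward direction I would assume (C2) and apply Faithfulness to $Y_k \centernot\ind Y_l \mid \cS$ to obtain a path $p$ active given $\cS$. This $p$ cannot lie entirely inside $\Y$: such a path, being active given $\cS$, would remain active given $\cS \cup \X$, because augmenting the set can neither hit a non-collider of $p$ (none lie in $\X$, and none lies in $\cS$ by activity) nor deactivate a collider (the collider condition can only improve upon enlarging the set). That would contradict $Y_k \ind Y_l \mid \cS, \X$. Hence $p$ passes through $\X$, which is (ii). For (i) I would argue by contradiction: if $\cS$ failed to d-separate $Y_k, Y_l$ in $\G|_{\Y}$, there would be an intra-$\Y$ path active given $\cS$, which by (O1) is active given $\cS$ in $\G$ and then, by the same persistence argument, active given $\cS \cup \X$, again contradicting $Y_k \ind Y_l \mid \cS, \X$.

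For the reverse direction I would assume (i) and (ii). Condition (ii) supplies a $\cS$-active path, so Faithfulness gives $Y_k \centernot\ind Y_l \mid \cS$, the first half of (C2). For the second half I must show that $\cS \cup \X$ d-separates $Y_k, Y_l$, i.e. that every path $q$ is blocked by $\cS \cup \X$. If $q$ enters $\X$, it is blocked by (O2). If $q$ stays inside $\Y$, then by (i) and (O1) it is already blocked by $\cS$ in $\G$, and adding $\X$ cannot unblock it: a blocking non-collider remains in the set, and a blocking collider still has no descendant in $\cS \cup \X$, since all its descendants lie in $\Y$ and the $\Y$-part of the conditioning set is unchanged. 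Thus every path is blocked, and the Markov property yields $Y_k \ind Y_l \mid \cS, \X$, completing (C2).

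The main obstacle I anticipate is the careful bookkeeping when the conditioning set is enlarged by $\X$: one must verify separately for colliders and non-colliders that adding $\X$ never unblocks an intra-$\Y$ path (needed for the two persistence arguments) while it always blocks a path that strays into $\X$ (the first-$\X$-node-is-a-non-collider argument of (O2)). Both rely entirely on acyclicity together with the one-directional coupling (A2), so the proof would fail without the descendant-containment property that descendants of $\Y$ stay in $\Y$; this is the one place where I would be most careful.
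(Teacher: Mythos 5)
Your proof is correct and takes essentially the same route as the paper's: both directions translate between (in)dependence and d-separation via (A1), and (A2) supplies the two key structural facts---paths entering $\X$ are blocked by $\X$ because the first $\X$-node entered cannot be a collider, and intra-$\Y$ paths keep their blocking status when $\X$ is added to the conditioning set since descendants of $\Y$-nodes stay in $\Y$---with your (O1)--(O2) merely spelling out bookkeeping that the paper's proof leaves implicit. One harmless mislabel: in the forward direction, the step from $Y_k \centernot\ind Y_l \mid \cS$ to the existence of a $\cS$-active path is the contrapositive of the Markov property rather than Faithfulness, but since (A1) assumes both, nothing breaks.
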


\begin{proof}[Proof of Lemma \ref{lem.graphicalcharacterization2}]
Suppose that (C2) holds. Since $Y_k \ind Y_l | \cS, \X$, by Faithfulness $\cS$ and $\X$ together d-separate $Y_k, Y_l$ in $\G$ and hence $\cS$ must d-separate them in the subgraph over $\Y$. Hence (i) holds. On the other hand, since $Y_k \centernot\ind Y_l | \cS$, by the Markov property $\cS$ does not d-separate the two nodes in the full graph $\G$. Hence there must be a path passing through $\X$ that is not blocked by $\cS$ so that (ii) holds. 

\noindent Conversely, assume that a set $\cS$ and a path $p$ as in the lemma exist. Note that any path between $Y_k, Y_l$ that passes through $\X$ must always be blocked by $\X$ because of the unidirectionality assumption (A2) which implies that the path cannot contain cross-regional colliders of the form $Y_s \rightarrow X_i \leftarrow Y_t$. Since by (i), $\cS$ also blocks any open path internal to $\Y$, by the Markov property, we have $Y_k \ind Y_l | \cS, \X$. Finally by (ii) and Faithfulness, we must have $Y_k \centernot\ind Y_l | \cS$ so that (C2) holds. 
\end{proof}

\begin{figure*}[!ht] 
    \centering
\begin{subfigure}[t]{0.25\textwidth}
        \centering
    \includegraphics[scale=0.25]{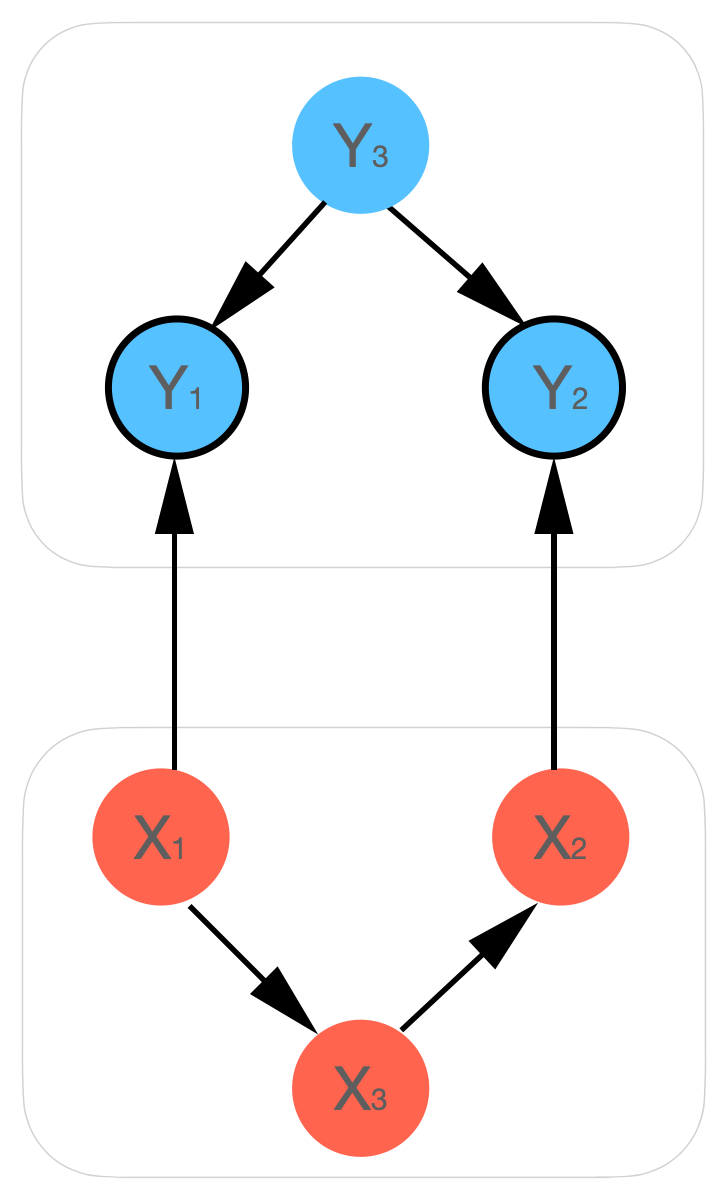}
    \caption{(C2) satisfied}
    \label{fig:lemma3_2a}
\end{subfigure}%
~
\begin{subfigure}[t]{0.25\textwidth}
        \centering
    \includegraphics[scale=0.25]{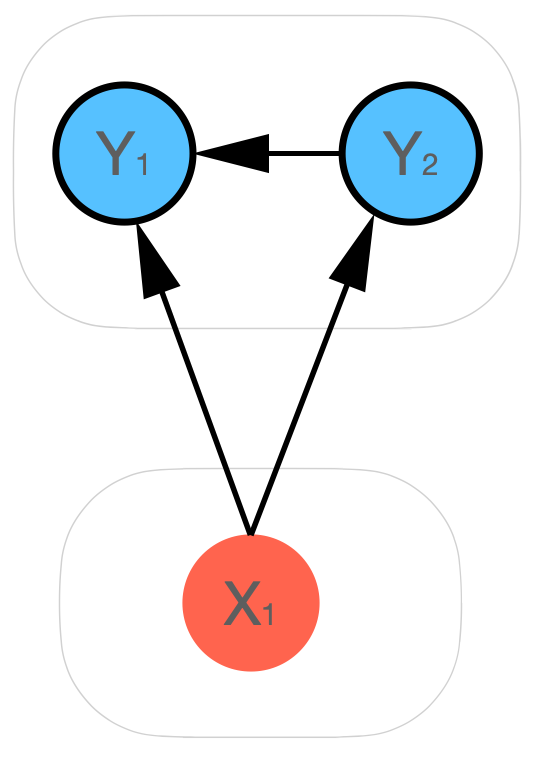}
    \caption{(C2) not satisfied}
    \label{fig:lemma3_2b}
\end{subfigure}%
~
\begin{subfigure}[t]{0.25\textwidth}
        \centering
    \includegraphics[scale=0.25]{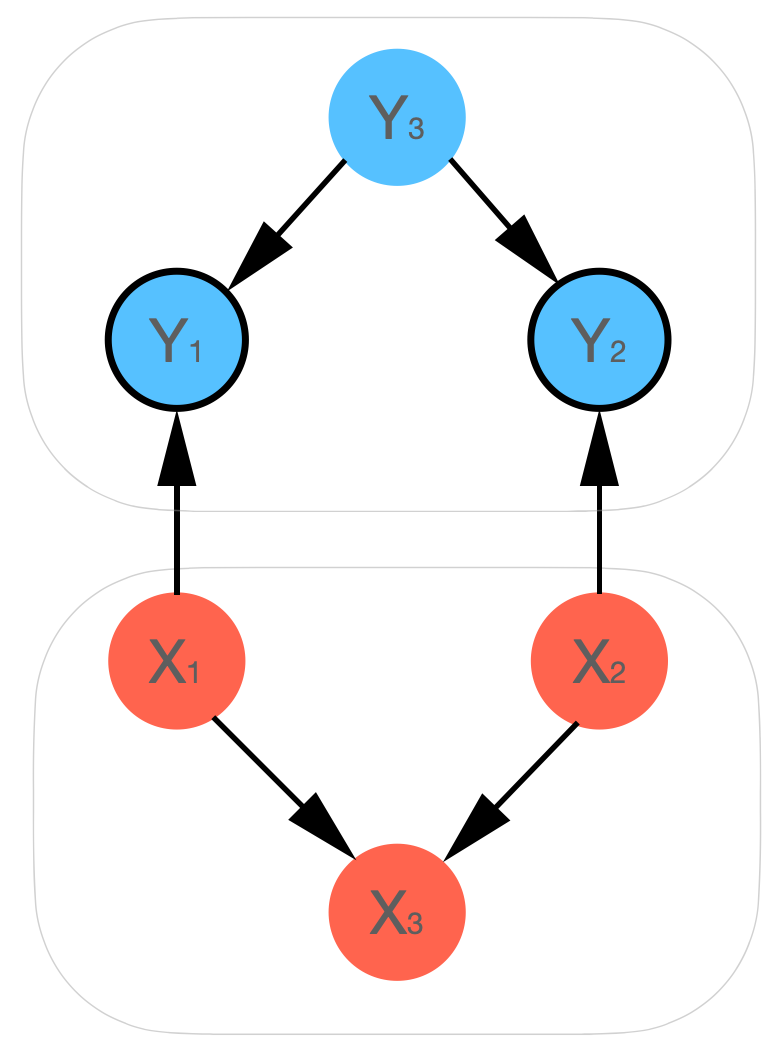}
    \caption{(C2) not satisfied}
    \label{fig:lemma3_2c}
\end{subfigure}
\caption{Examples of the unidirectional causal vector model $\X \to \Y$ for vector-valued variables $\X$ and $\Y$ for illustrating Lemma \ref{lem.graphicalcharacterization2}. In (a), (C2) is satisfied for $Y_k$ and $Y_l$ with a bold border, whereas in (b) and (c) it is not satisfied. In particular, in (a) $Y_1 \centernot \ind Y_2 | Y_3$ and $Y_1 \ind Y_2 | Y_3, \X$. In (b), $Y_1 \centernot \ind Y_2|\cS$ where $\cS = \emptyset$ but $Y_1 \centernot \ind Y_2|\cS,\X$ because point (i) of Lemma \ref{lem.graphicalcharacterization2} is not satisfied. In (c) Point (ii) of Lemma \ref{lem.graphicalcharacterization2} is not satisfied because the path $Y_1 \leftarrow X_1 \to X_3 \leftarrow X_2 \to Y_2$ is not unblocked by $\cS = Y_3$ in $\G$. } \label{fig.lemma4_graph}
\end{figure*}

In Figure \ref{fig.lemma4_graph}, the conditions of Lemma \ref{lem.graphicalcharacterization2} are depicted with the help of illustrative examples.

\begin{proof}[Proof of Theorem \ref{thm.edgedensity1}]
Since by Lemmas \ref{thm.principle1} and \ref{thm.principle2}, principles (P1) and (P2) hold, it follows that $d(\X|\Y) \geq 0$ and $d(\Y|\X) \leq 0$ and thus $d(\X|\Y) - d(\Y|\X) \geq 0$. Condition (C1) then implies that the first inequality becomes a strict inequality $d(\X|\Y) > 0$ and similarly Condition (C2) implies that the second inequality becomes  strict, i.e. $d(\Y|\X) < 0$. Therefore, if one of these conditions holds, then 
\[d(\X|\Y) - d(\Y|\X) > 0. \]
Thus if the causal direction is unknown, it can be inferred from the sign of  $d(\X|\Y) - d(\Y|\X)$.
\end{proof}

\subsection{Comparison with Vanilla-PC algorithm} \label{subsec.compPC}
In order to determine the causal direction of Model 1 (with linear interactions) using an ad-hoc adaptation of the PC algorithm (Vanilla PC), we treat each node in the vector-valued variables $\X$ and $\Y$ as an independent node. We run the PC algorithm on the set of $n+m$ nodes, where $n$ and $m$ are the group sizes of $\X$ and $\Y$, respectively. Recall that the maximum number of edges from $\X$ to $\Y \ \mathrm{edgeMax}$ is $n \cdot m$. On the resulting CPDAG of the of the true graph $\G$, we compute $\mathrm{edge_{X \to Y}}$, i.e. the number of directed edges from nodes in $\X$ to nodes in $\Y$. Similarly we compute $\mathrm{edge_{Y \to X}}$. We then normalise these quantities w.r.t.~$\mathrm{edgeMax}$ to compute,
\begin{equation*}
\mathrm{edgeDens_{X \to Y}}  = \frac{\mathrm{edge_{X \to Y}}}{\mathrm{edgeMax}} \ , \ \mathrm{edgeDens_{Y \to X}}  = \frac{\mathrm{edge_{Y \to X}}}{\mathrm{edgeMax}} \ .
\end{equation*}
Given the sensitivity parameter $\alpha$ and 
\begin{align*}
\mathrm{edgeDiff} = \mathrm{edgeDens_{X \to Y}} - \mathrm{edgeDens_{Y \to X}},    
\end{align*} 
we make a decision according to the following rule:
\begin{align*}
&\text{if } \ |\mathrm{edgeDiff}| \leq \alpha,  &\text{ then `direction indeterminate'}, \\
&\text{if } \  \mathrm{edgeDiff} > \alpha, &\text{ then } `\X \text{ causes } \Y\text{'}, \\
&\text{if } \  \mathrm{edgeDiff} < -\alpha, &\text{ then } `\Y \text{ causes } \X \text{'}. 
\end{align*}
Comparison plots between our methods and Vanilla PC can be found in Figures \ref{fig.groupsize30}, \ref{fig.different_group_sizes_low_group_density}, \ref{fig.different_group_sizes_low_interaction_density} and \ref{fig.different_sample_sizes}.

%A sample plot of the fraction of decisions in 100 experiments for $\X$ and $\Y$ of size 15 each, $\alpha = 0.01$, 100 samples, cause (and effect) group density = 0.3 and varying interaction density is shown in figure below. The significance level $\tilde{\alpha}$ for the partial correlation test (employed as the conditional independence test in the skeleton phase) is chosen to be 0.01.  

%\begin{center}
%        \includegraphics[scale=0.3]{}
 %   \caption{----}
    \label{fig:raw_pc_comp}
%\end{center}

%The PC algorithm does not perform as well in this context as can be seen by the figure above. We attribute this issue to the large number of variables considered leading to a several conditional independence tests with possibly high dimensional conditioning sets. Our algorithms alleviate this issue partly by considering the vector variables individually. 

\subsection{Real Data Example Details}

For testing our algorithms on the climate science example considered in the main text, we also plotted a histogram of $\mathrm{Crit} = \widehat{d(\X|\Y)} - \widehat{d(\Y|\X)}$ values, see \eqref{eq.d_1}, \eqref{eq.d_2} for 2G-VecCI.PC and \eqref{eq.d_hat_1}, \eqref{eq.d_hat_2} for 2G-VecCI.Full. This helps visualise the spread of the fraction of correct and wrong inferences across different choices of coarse-graining of the data, see Figure \ref{fig.hist_realdata} (for further details see the \texttt{real\_data\_AAAI.py} file in the Supplement). 

%\subsection{Additional Plots}

\begin{figure*}[!ht]
\begin{center}
\scalebox{0.9}{
\begin{tabular}{|c|c|c|c|}
\hline
    \textbf{Method} & \textbf{Assumptions} & \textbf{Strengths} & \textbf{Weaknesses} \\
    \hline
    \makecell{Trace Method \\ \citep{Janzing09}.} & \makecell{Linearity; \\ Additive noise. \\ technical assumptions.} & \makecell{Very Fast; \\ Strong empirical performance \\ on linear data \\ for large groups.} & \makecell{ No theoretical guarantees \\ beyond the noiseless case; \\ Not applicable to \\ nonlinear interactions.} \\
    \hline 
    \makecell{LiNGaM on group means \\ \citep{ShimizuLinGaM}.} & \makecell{Linearity; \\ Additive noise; \\ Non-Gaussian noise. } & \makecell{Fast and simple; \\ Reliable for small groups \\ if assumptions are met.\\ } & \makecell{Averaging drives noise \\ close to Gaussian \\ when groups are large. \\ Vulnerable to \\ opposing effects.}  \\
    \hline 
    Vanilla-PC & \makecell{Markov property and \\ Faithfulness \\ on micro-DAG; \\ Assumptions on CI \\ tests.} & \makecell{Non-parametric; \\ Infinite and finite sample guarantees \\
    \citep{KaBue07}; \\ Good performance \\ for small groups.} & \makecell{Slow for \\ dense large groups; \\ Impaired performance \\ for large groups and \\ dense interactions.} \\
    \hline
    2G-VecCI.PC & \makecell{Markov property and \\ Faithfulness \\ on micro-DAG \\ for soundness. \\ (C1), (C2) \\ for completeness. \\ Assumptions \\ on CI tests.}& \makecell{Non-parametric; \\ Infinite sample guarantees; \\ Good performance \\ for large groups \\ in many regimes; \\ Less CI tests than Vanilla-PC \\ in worst case scenarios.} &  \makecell{Slow for\\ large groups; \\ Impaired performance \\ when groups are very small \\ or densely connected.}\\
    \hline
    2G-VecCI.Full & \makecell{Semi-causal or \\ causal model; \\ Principles (P1) and (P2); \\ Assumptions \\ on CI tests.} &  \makecell{Non-parametric; Faster than \\ PC-based methods; \\ Bounded number \\ of CI-tests; \\ Strong empirical \\ performance. }& \makecell{No theoretical guarantees; \\ Impaired performance \\ when groups are very small.} \\
    \hline
\end{tabular}
}
\caption{Existing methods to identify the causal relationship between two groups of variables with unidirectional interactions.} \label{table.comparison}
\end{center}
\end{figure*}

\begin{figure*}[!ht] 
    \centering
    \begin{subfigure}[t]{0.22\textwidth}
        \centering
        \includegraphics[scale=0.3]{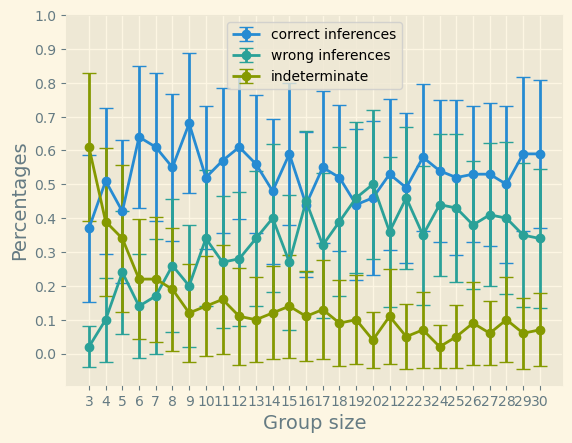}
        \caption{2G-VecCI.PC}
    \end{subfigure}%
    ~ 
    \begin{subfigure}[t]{0.22\textwidth}
        \centering
\includegraphics[scale=0.3]{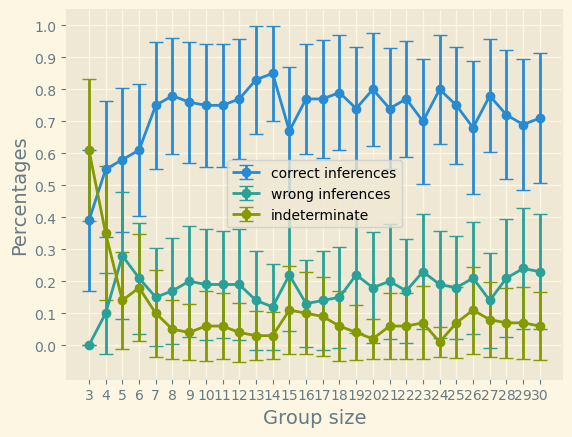}
        \caption{2G-VecCI.Full}
    \end{subfigure}
~
    \begin{subfigure}[t]{0.22\textwidth}
        \centering
        \includegraphics[scale=0.3]{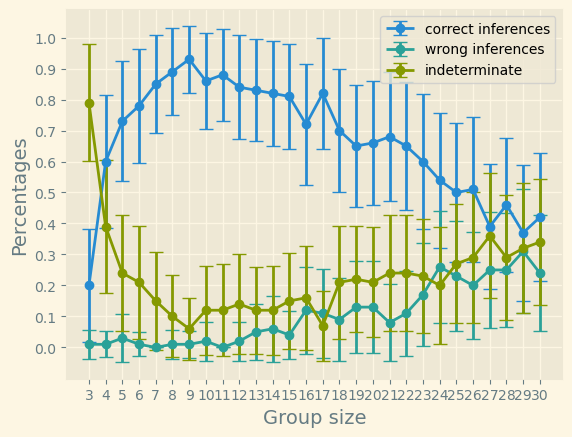}
        \caption{Vanilla-PC}
    \end{subfigure}
    \caption{Performance of 2G-VecCI.PC, 2G-VecCI.Full and Vanilla-PC for groups of different sizes and increased densities ($30 \%$). The density of the interaction matrix $A$ is lowered to $30\%$. Non-zero entries of $A$ are drawn uniformly randomly from $[-0.7,0.7]$ and 100 random models are run per parameter combination with 100 samples each. We set the sensitivity parameter of both 2G-VecCI.PC and Vanilla PC to $10^{-4}$. Vanilla PC deals well with sparse interaction matrices and outperforms our methods when groups are small.}  \label{fig.different_group_sizes_low_interaction_density}
\end{figure*}

\begin{figure*}[!ht] 
    \centering
    \begin{subfigure}[t]{0.22\textwidth}
        \centering
        \includegraphics[scale=0.3]{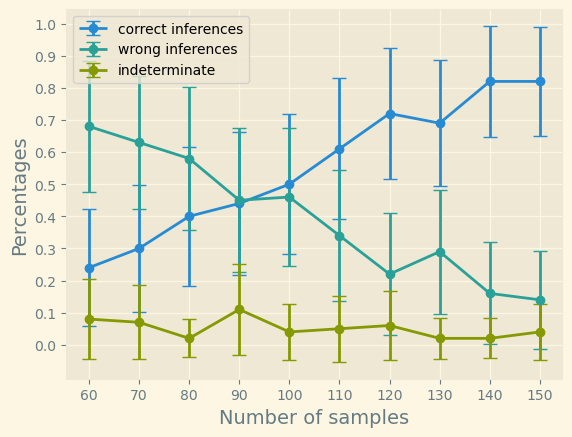}
        \caption{2G-VecCI.PC}
    \end{subfigure}%
    ~ 
    \begin{subfigure}[t]{0.22\textwidth}
        \centering
        \includegraphics[scale=0.3]{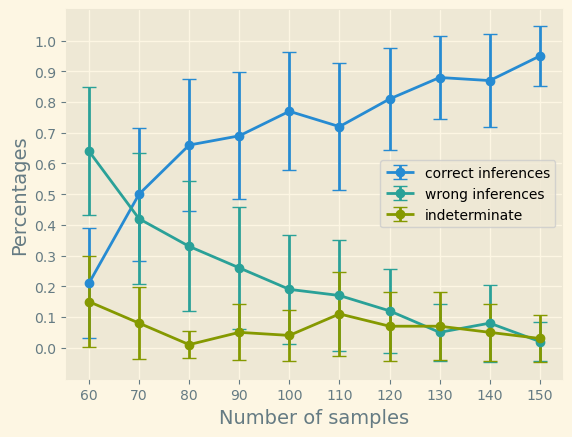}
        \caption{2G-VecCI.Full}
    \end{subfigure}
~
    \begin{subfigure}[t]{0.22\textwidth}
        \centering
        \includegraphics[scale=0.3]{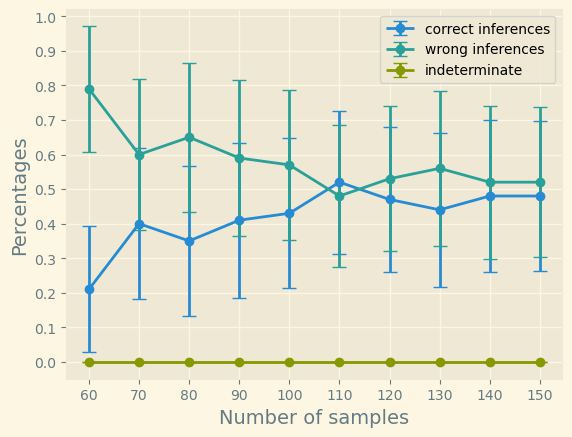}
        \caption{Vanilla-PC}
    \end{subfigure}
    \caption{Performance of 2G-VecCI.PC, 2G-VecCI.Full and Vanilla-PC for groups of size 30 at different sample sizes. Internal group densities and the density of the interaction matrix $A$ are set to $30\%$. Non-zero entries of $A$ are drawn uniformly randomly from $[-0.7,0.7]$ and 100 random models are run per parameter combination. We set the sensitivity parameter of both 2G-VecCI.PC and Vanilla PC to $10^{-5}$ to ensure that the lack of performance is not due to an overly conservative choice of this parameter.} \label{fig.different_sample_sizes}
\end{figure*}

\begin{figure*}[!ht] 
    \centering
    \begin{subfigure}[t]{0.22\textwidth}
        \centering
        \includegraphics[scale=0.3]{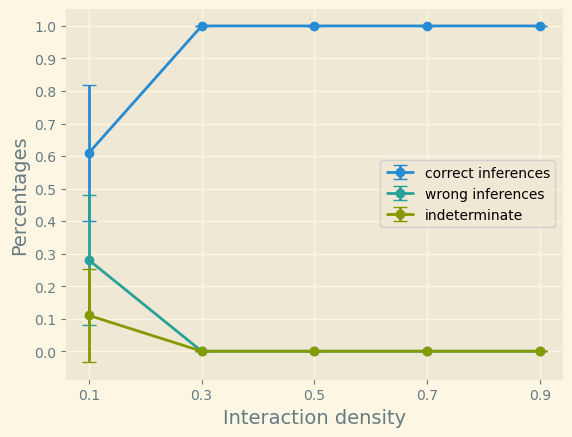}
        \caption{2G-VecCI.Full}
    \end{subfigure}
    ~ 
    \begin{subfigure}[t]{0.22\textwidth}
        \centering
        \includegraphics[scale=0.3]{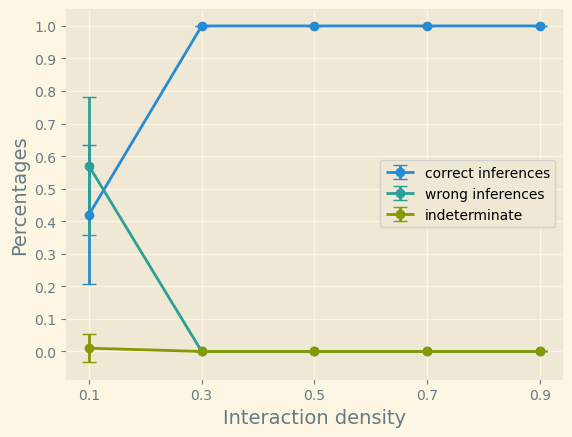}
        \caption{Trace Method}
    \end{subfigure}
    \caption{Performance of 2G-VecCI.Full and the trace methods for groups of size 100 with 500 available samples and linear interactions. Performance is shown along increasing density of the interaction matrix $A$. Non-zero entries of $A$ are drawn uniformly randomly from $[-0.7,0.7]$, groups are assumed dense ($\approx 30$ connections for every node) and 100 random models are run per choice of parameters. Both methods recover the correct causal direction well.}
    \label{fig.groupsize100}
\end{figure*}
%\begin{figure*}[!ht] 
%    \centering
%    \begin{subfigure}[t]{0.4\textwidth}
%        \centering
%        \includegraphics[scale=0.35]{Figures/Algo1_Plot of cause_region_density_Percentages across all choices for auxillary parameters.png}
%        \caption{2G-VecCI.PC}
%    \end{subfigure}%
%    ~ 
%    \begin{subfigure}[t]{0.4\textwidth}
%        \centering
%        \includegraphics[scale=0.35]{Figures/Algo2_Plot of cause_region_density_Percentages across all choices for auxillary parameters.png}
%        \caption{2G-VecCI.Full}
%    \end{subfigure}
%    \caption{Performance of 2G-VecCI.PC and 2G-VecCI.Full for groups of size 30 with 100 available samples and linear interactions. Performance is shown for increasing density of the cause region, and fractions are averages across different parameters. Non-zero entries of $A$ are drawn uniformly randomly from $[-0.7,0.7]$ and 100 random models are run per choice of parameters. 2G-VecCI.PC is affected more strongly by increasing group densities.} \label{fig.groupsize30_density}
%\end{figure*}
\begin{figure*}[!ht] 
    \centering
    \begin{subfigure}[t]{0.22\textwidth}
        \centering
    \includegraphics[scale=0.3]{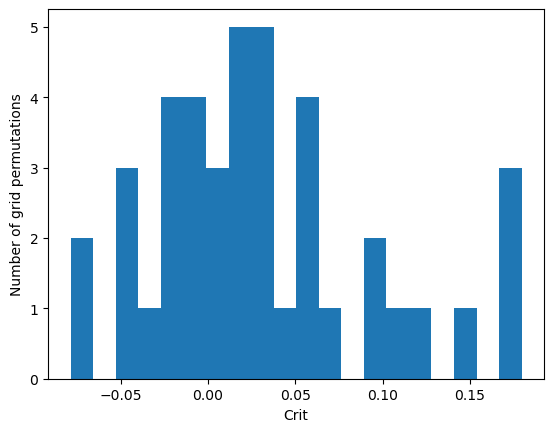}
    \caption{2G-VecCI.Full}
    \end{subfigure}%
    ~
       \begin{subfigure}[t]{0.22\textwidth}
        \centering
    \includegraphics[scale=0.3]{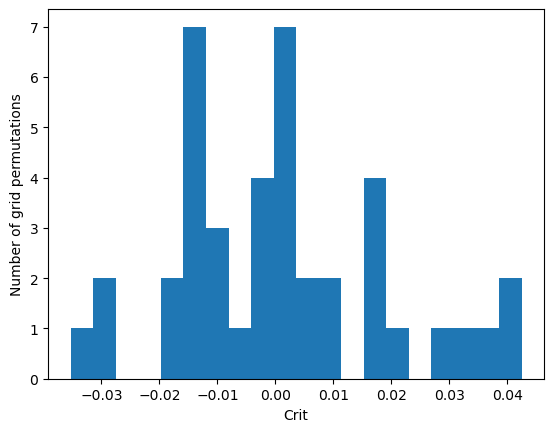}
    \caption{2G-VecCI.PC}
    \end{subfigure}
    \caption{Histogram of $\mathrm{Crit}$ values for different choices of coarse grainings of Nino and BCT surface temperatures. In (a) there is a trend favouring $\mathrm{Crit}>0$. In particular, out of the 41 different coarse grainings considered, 26 have $\mathrm{Crit}>0$, 14 have $\mathrm{Crit}<0$, 1 has $\mathrm{Crit}=0$ and positive values tend to be higher. In (b), however, there is no such clear trend and, in particular, out of the 41 different coarse grainings considered, 18 have $\mathrm{Crit}>0$, 20 have $\mathrm{Crit}<0$ and 3 have $\mathrm{Crit}=0$.}
    \label{fig.hist_realdata}
\end{figure*}

\end{document}